\documentclass[11pt,reqno]{amsart}

\usepackage{amsmath,amsfonts,amssymb,commath}
\usepackage{graphicx,harvard}

\usepackage{multibib}

\usepackage{booktabs}
\usepackage[load-configurations=version-1]{siunitx} 
\usepackage{siunitx}
\usepackage{multirow}
\usepackage{pdfpages}

\setlength{\textwidth}{15cm}
\setlength{\evensidemargin}{1cm}
\setlength{\oddsidemargin}{1cm}
\setlength{\textheight}{20.5cm}

\usepackage{caption2}
\usepackage{setspace}

\newcommand{\Expect}{\mathbb E}

\newcommand{\Prob}{\mathbb P}

\newtheorem{theorem}{Theorem}
[section]

\newtheorem{proposition}[theorem]{Proposition}

\newtheorem{definition}[theorem]{Definition}
\newtheorem{lemma}[theorem]{Lemma}

\newtheorem{assumption}[theorem]{Assumption}

\newtheorem{example}[theorem]{Example}

\def\comp{{\rm comp}}
\def\taubar{{\overline\tau}}
\def\bbar{{\overline b}}

\def\bigO{{\sf O}}

\def\pubar{{\overline p}}
\def\plbar{{\underline p}}
\def\vlbar{{\underline v}}
\def\vubar{{\overline v}}    \def\Tbar{{\overline T}}

\begin{document}
\title[Learning Underspecified Models]{\bf Learning Underspecified Models}

\date{\today}

\author{In-Koo Cho}
\address{Department of Economics, Emory University, Atlanta, GA 30322 USA}
\email{icho30@emory.edu}
\urladdr{https://sites.google.com/site/inkoocho}

\author{Jonathan Libgober}
\address{Department of Economics, University of Southern California, Los Angeles, CA 90089 USA}
\email{libgober@usc.edu}
\urladdr{http://www.jonlib.com/}

\thanks{The first author gratefully acknowledges financial support from the National Science Foundation (SES-1952882) and the Korea Research Foundation (2018S1A5A2A01029483).}

\begin{abstract}
  This paper examines whether one can learn to play an optimal action while only knowing part of true specification of the environment.   We choose the optimal pricing problem as our laboratory, where the monopolist is endowed with an   \emph{underspecified} model of the market demand, but can observe market outcomes.   In contrast to conventional learning models where the model specification is complete and exogenously fixed, the monopolist has to learn the specification and the parameters of the demand curve from the data.   We formulate the learning dynamics as an algorithm that forecast the optimal price based on the data, following the machine learning literature
(\citeasnoun{Shalev-ShwartzandBen-David14}).
Inspired by PAC learnability, we develop a new notion of learnability by requiring that the algorithm must produce an accurate forecast with a reasonable amount of data {\em uniformly} over the class of models consistent with the part of the true specification.  In addition, we assume that the monopolist has a lexicographic preference over the payoff and the complexity cost of the algorithm, seeking an algorithm with a minimum number of parameters subject to PAC-guaranteeing the optimal solution (\citeasnoun{Rubinstein86}).  We show that for the set of demand curves with strictly decreasing uniformly Lipschitz continuous marginal revenue curve, the optimal algorithm recursively estimates the slope and the intercept of the linear demand curve, even if the actual demand curve is not linear.  The monopolist chooses a misspecified model to save computational cost, while learning the true optimal decision uniformly over the set of underspecified demand curves.\\[2pt] 
\noindent {\sc Keywords.} Optimal Sales Mechanism, Underspecified model, PAC guarantee, Complexity Cost, Misspecification
\end{abstract}

\maketitle

\section{Introduction}

A rational decision maker is endowed with a complete and correctly specified model of the state and the relationship between his action and consequence (\citeasnoun{Simon87}).   Yet, it is an excessively strong assumption that a decision maker is fully aware of {\em all relevant states} and knows {\em precise specification} of the function that maps his strategy to payoff.  This paper supposes instead that the decision maker knows only a part of true specification of the environment.  Thus, the decision maker is endowed with an {\em underspecified} model,  facing model uncertainty (\citeasnoun{HansenandSargent07}), as multiple models including the true model are consistent with the partial specification.\footnote{An underspecified decision problem differs from a misspecified model.   A misspecified model completes the specification of the decision problem but the specification of the model differs from the actual specification.}    Our research objective is to understand how a decision maker learns to choose a specification to find the optimal decision under the model uncertainty.

We choose the monopolistic profit maximization problem (\citeasnoun{Myerson81}) as the laboratory, which has been extensively studied in economics and computer science.  Our exercise has four features, which together differentiate our research from the existing studies.   First, in a sharp contrast to the Bayesian models, the monopolist lacks any information to form a prior over the set of possible missing variables.  The seller is facing model uncertainty (\citeasnoun{HansenandSargent07}).

Second, the robust approach to the model uncertainty typically allows no opportunity for the decision maker to learn from the data  (\citeasnoun{HansenandSargent07} and \citeasnoun{BergemannandMorris05b}).\footnote{A remarkable exception is  a series of papers by Hansen and Sargent (e.g., \citeasnoun{HansenandSargentJET2020}).}
In our case,  the seller has an opportunity to observe data to learn about the demand curve.   In conventional learning models, the specification of the model is exogenously fixed and the algorithm calculates the unknown parameters.    In our exercise, the seller has to choose both the specification and the parameters of the demand curve to learn from the data.

Third, the robust approach focuses on the best response against the worst possible conjecture so that the strategy can perform ``reasonably well'' over the admissible set of demand curves, if not optimally.  In our case, the learning algorithm must produce a good approximation to the true optimal strategy with high confidence {\em uniformly} over the set of admissible demand curves.   If so, we say that the algorithm is uniformly learnable.  
Our notion of learnability inspired by  the PAC criterion in the machine learning literature (\citeasnoun{Shalev-ShwartzandBen-David14}).
Given $\epsilon>0$ error bound and $1-\rho>0$ confidence requirement, we search for a learning algorithm ${\mathcal A}$ and a stopping time $T(\epsilon,\rho)$ so that after $T(\epsilon,\rho)$ time steps, algorithm ${\mathcal A}$ produces a forecast  that is within $\epsilon$ neighborhood of the true optimal strategy with probability $1-\rho$ uniformly over the set of admissible demand curves. PAC criterion also disciplines the data complexity. The number $T(\epsilon,\rho)$ of the time steps should increase at the rate of a polynomial function of $1/\epsilon$ and at the logarithmic rate of $1/\rho$.

Fourth, the  monopolist in our exercise bears the computational cost.  We consider two sources of computational cost: the source of data and the complexity of an algorithm.
  The computer science literature (e.g., \citeasnoun{ColeRoughgarden2014}) focuses on the direct revelation game, where the buyers report their private valuation.   The private information of the buyer is freely available to the algorithm.    While the direct mechanism is a widely used mathematical model, it is rare that an actual trading protocol is a revelation game.  Instead, one has to invert the strategy of a buyer to infer his valuation, which is not a trivial exercise.  We assume that public data is cheaper than reported private information.    We measure the complexity of an algorithm by the number of parameters an algorithm has to estimate.\footnote{This type of complexity measure is consistent with the complexity measure for a neural work (cf. \citeasnoun{Rumelhartetal86}, \citeasnoun{Wasserman89} and \citeasnoun{Weisbuch90}). }   Following \citeasnoun{Rubinstein86}, we assume that the monopolist has a lexicographic preference over the payoff and the complexity cost.    The seller first seeks a PAC guaranteeing algorithm to maximize profit and then, chooses the simplest learning algorithm among those that achieve the maximum payoff.

We show that a uniformly learnable algorithm must estimate at least two parameters (Proposition \ref{pr: lower bound}).   We construct an algorithm estimating two parameters that learns uniformly the optimal strategy of the monopolist over the set of demand curves with strictly decreasing uniformly Lipschitz continuous marginal revenue curve.  The algorithm recursively estimates a linear demand curve (which can be parameterized by the slope and the intercept) using price and quantity, learning uniformly the optimal strategy of the monopolist (Theorem \ref{thm:mainresult}).  

The linear demand curves may be a grossly misspecified model if the actual demand curve is highly non-linear.   Nevertheless, within a ``reasonable amount'' of time, the monopolistic seller behaves as if he knows the actual demand curve.    While a monopolist could use an algorithm calculating  parameters of a correctly specified model, such as the non-parametric estimation algorithm (\citeasnoun{ColeRoughgarden2014}), the monopolist chooses a simpler model of demand to save computational costs. In this sense, a misspecified model would be a representation of the procedural rationality of the seller (\citeasnoun{OsborneandRubinstein98}).

The rest of the paper is organized as follows. Section \ref{Literature Review} reviews the literature, clarifying our contribution beyond the existing literature. In Section \ref{Description}, we formally describe the problem and define the basic concepts, with a summary of the main result, where we calculate the lower bound of the complexity of algorithms to calculate the optimal price.   In Section \ref{Construction}, we construct the algorithm, which is the simplest algorithm among algorithms that learns uniformly the optimal strategy. We state the main result, whose proof is in the appendix. Instead of a formal proof, Section \ref{Heuristics} informally illustrates how the monopolist can efficiently and uniformly learn the optimal price of a non-linear demand curve through a linear demand curve.   Section \ref{Numerical Experiments} reports numerical exercises to show that the performance of our algorithm is comparable to a more elaborate algorithm of \citeasnoun{ColeRoughgarden2014}.
We show how uniform learnability is related to the notion of PAC guarantee in computer science.  Section \ref{Concluding Remarks} concludes the paper.

\section{Literature Review}
\label{Literature Review}

Our paper differs from three main approaches to investigating the decision problem with underspecified models.     The most prominent approach is that a decision-maker chooses his actions assuming uncertainty about the environment resolves in favor of the worst-case (e.g., \citeasnoun{HansenandSargent07}, \citeasnoun{Carroll2015}, \citeasnoun{Carroll2017}, \citeasnoun{Du2018} and \citeasnoun{LibgoberMu2021}). The choice that maximizes the objective under the worst conjecture will typically differ from the optimal solution against the truth, sometimes significantly so. While data could, in principle, bring the decision-maker closer to optimality, typically, these models are silent on how the decision-maker might do this.\footnote{A remarkable exception is a series of papers by Hansen and Sargent (e.g., \citeasnoun{HansenandSargentJET2020}).}   In contrast, the monopolist in our model seeks to find a \emph{true} optimal price for an unknown demand curve by observing data.

In the second approach, the modeler completes the specification of the decision problem by imposing a (parametric) model, where the decision maker estimates the parameters using data. While this approach does allow the decision-maker to learn from data, the specification is fixed exogenously, and often  excludes the true mapping from actions into outcomes (e.g., \citeasnoun{ChoandKasa15}, \citeasnoun{ChoandKasa17}, \citeasnoun{HKS2018}, \citeasnoun{EPY2021}, \citeasnoun{FII2021a} and \citeasnoun{FudenbergLanzaniStrack}).\footnote{In the operations research, \citeasnoun{BesbesandZeeviMS2015} examined the monopoly profit maximization problem, where the monopolist is endowed with a linear demand curve, estimating the demand curve to choose the price. \citeasnoun{BesbesandZeeviMS2015} demonstrated that one could construct an algorithm to let the monopolist learn the actual optimal price for a class of demand curves satisfying a set of regularity conditions.}
The monopolist in our model, by contrast, is not committed to a particular specification.  The monopolist can use the non-parametric estimation method of the demand curve to avoid the misspecification problem.   The monopolist chooses the linear model with two parameters to minimize the computational cost, PAC guaranteeing the optimal price. The linear model, which is misspecified, is not imposed by the modeler but derived from optimization by the monopolist.

This paper follows a third approach initiated by machine learning models\footnote{\citeasnoun{Nisanetal2007} report the early applications of the algorithms to game theoretic models, including the monopoly problem.} (e.g., \citeasnoun{HuangMansourRoughgarden}, \citeasnoun{ColeRoughgarden2014}, \citeasnoun{GonczWeinberg2018} and \citeasnoun{GoncalvesFurtado}). A typical approach in this literature (\citeasnoun{ColeRoughgarden2014}) assumes that the seller non-parametrically estimates the demand curve. This approach avoids the underspecification issue, and the seller can achieve the true optimal curve as the estimator converges to an actual demand curve. The investigation then focuses on the consistency of the estimator and data complexity.

Following \citeasnoun{ColeRoughgarden2014}, we freely borrow the critical concepts developed in the machine learning literature (e.g., \citeasnoun{Shalev-ShwartzandBen-David14}). We depart from \citeasnoun{ColeRoughgarden2014}, however, by considering the complexity of the algorithm and the cost of obtaining data. An essential advantage of the non-parametric estimation technique is to avoid the misspecification of the demand curve. The downside is that the estimation algorithm is complex because the estimator requires a possibly unbounded number of parameters to represent a highly non-linear demand curve.   If the monopolistic seller incurs the cost of storing the estimator in the memory, he will search for an algorithm that uses a simpler specification of the demand curve.

\section{Description}
\label{Description}

\subsection{Demand} 

There are $N$ buyers, each of whom is indexed by $i\in\{1,\ldots,N\}$ and is endowed with reservation value $v_i\in [\vlbar,\vubar]$.  Let $F_i(v_i)$ be the distribution of valuation of buyer $i$. We assume that $v_i$ and $v_j$ are independent $\forall i\ne j$.   Given $p$, buyer $i$ purchases one unit of the good if $p\le v_i$.  If the seller charges $p$, the (normalized) aggregate demand is
\[
q= \frac{1}{N}\sum_{i=1}^N {\mathbb I}( v_i\ge p)
\ \ \text{ and } \ \
\Expect q =1- \frac{1}{N}\sum_{i=1}^NF_i(p).
\]
 Define
\[
F(p)=\frac{1}{N}\sum_{i=1}^NF_i(p)
\ \ \text{ and } \ \ 
\epsilon_2=q -(1-F(p)), 
\]
where $\Expect\epsilon_2=0$.

We interpret $1-F(p)$ as the expected quantity of sales. The actual amount $q$ of sales as a random variable, whose expected value is $1-F(p)$ if the price is $p$.   If $p=0$, $q=1$ with probability 1, and $F(p)\rightarrow 1$ as $p\rightarrow\vubar$.   We can treat $F$ as a distribution function.   Let $f$ be the density function of $F$.

\subsection{What the Monopolist Knows}

At the beginning, the complete specification $F$ is not available to the monopolist.    The monopolist does not have a prior over the set of feasible demand curves.
Instead, the monopolist knows a specific set of actual properties of $F$, which is common knowledge among all players.   We write the set of distributions that satisfy the said properties as ${\mathcal F}$.

Let us list the properties we are interested in and define the distributions accordingly.

\smallskip
\begin{itemize}
\item[{\sf IH}]
The support of $F$ is $[\plbar,\pubar]$ and $f(p)>0$ $\forall p\in [\plbar,\pubar]$.
$\forall F\in {\mathcal F}$, its density function $f$ is continuous over $[{\underline p}, {\overline p}]$ and its hazard rate
\[
\frac{f(p)}{1-F(p)}
\]
is increasing.
\end{itemize}

\smallskip
Let ${\mathcal F}^0$ be the set of all distributions over the buyer's valuations  satisfying the increasing hazard rate property.    In order to avoid the technical problems, we assume the Lipschitz continuity of the density function.  Define
  \[
    {\mathcal F}^\eta=\left\{ F\in {\mathcal F}^0 \mid \exists \eta>0, \ \forall p,p'\in[\plbar,\pubar], \ \left| f(p)-f(p')\right| <\eta | p-p |
      \right\}
  \]
as the collection of all feasible distributions over the buyer's valuations.
${\mathcal F}^\eta$ is a compact convex subset of ${\mathcal F}^0$, and is a large subset of ${\mathcal F}^0$ if we choose large $\eta>0$.     We state the useful properties in a lemma without proof.

\begin{lemma}  \label{lm: useful uniform}
  \begin{enumerate}
  \item $\forall\eta>0$, ${\mathcal F}^\eta$ is (sequentially) compact.
  \item $\forall\eta>0$, there exists a compact set $K$ in the interior of ${\mathbb R}^2_+$ so that $\forall F\in {\mathcal F}^\eta$, $(b^*(F),1-F(b^*(F)))\in K$.
  \item $\cup_{\eta>0} {\mathcal F}^\eta$ is a dense subset of ${\mathcal F}^0$.
\end{enumerate}
\end{lemma}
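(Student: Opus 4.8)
The plan is to treat the three parts separately: (1) and (2) rest on an Arzel\`a--Ascoli compactness argument followed by a continuity argument, while (3) is a smoothing construction designed to preserve the increasing-hazard-rate property.

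\emph{Part (1).} Metrize ${\mathcal F}^\eta$ by uniform convergence of densities. Given $F_n\in{\mathcal F}^\eta$ with densities $f_n$, the uniform Lipschitz bound $|f_n(p)-f_n(p')|\le\eta|p-p'|$ makes $\{f_n\}$ equicontinuous, and $\int_\plbar^\pubar f_n=1$ over an interval of length $L:=\pubar-\plbar$ forces $\min_p f_n(p)\le 1/L$, hence $0\le f_n(p)\le 1/L+\eta L$ for every $p$, so $\{f_n\}$ is uniformly bounded. Arzel\`a--Ascoli then yields a subsequence $f_{n_k}\to f$ uniformly, with $f$ continuous, nonnegative, Lipschitz with constant $\le\eta$, and of total mass $1$; the associated distribution functions converge uniformly to $F(p)=\int_\plbar^p f$. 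One checks $F\in{\mathcal F}^\eta$: the support condition is immediate, the Lipschitz bound passes to the limit, and monotonicity of $f/(1-F)$ holds because it is the pointwise limit of the increasing functions $f_{n_k}/(1-F_{n_k})$ (uniformly on $[\plbar,\pubar)$; near $\pubar$ the hazard rate diverges, which is consistent). The only delicate point is that the strict inequalities ``$f>0$'' and strict monotonicity of the hazard rate may fail in the limit, so the compactness claim should be read with ${\mathcal F}^\eta$ defined by weak inequalities (equivalently, with ${\mathcal F}^\eta$ replaced by its closure); with that convention the argument goes through.

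\emph{Part (2).} Using (1), set $\Phi(F):=\bigl(b^*(F),\,1-F(b^*(F))\bigr)$ where $b^*(F)=\argmax_{p\in[\plbar,\pubar]}p(1-F(p))$. The increasing hazard rate makes the marginal revenue $1-F(p)-pf(p)=(1-F(p))(1-p\,h(p))$, with $h=f/(1-F)$, strictly decreasing wherever positive, so the revenue has a unique maximizer, and that maximizer lies in $[\plbar,\pubar)$ because the revenue vanishes at $\pubar$ but is positive at any interior price. Uniform convergence $F_n\to F$ (densities) gives uniform convergence of the revenue functions, so by a maximum-theorem argument together with uniqueness of the maximizer, $b^*(\cdot)$ is continuous, hence so is $F\mapsto 1-F(b^*(F))$, and $\Phi$ is continuous on the compact set ${\mathcal F}^\eta$. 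Furthermore $b^*(F)>0$ (it is $\ge\plbar$, and $>0$ also when $\plbar=0$, since the revenue increases at $0$) and $1-F(b^*(F))>0$, so $\Phi({\mathcal F}^\eta)$ is a \emph{compact} subset of the open orthant ${\mathbb R}^2_{++}$; a compact subset of an open set lies at positive distance from the complement, so $\Phi({\mathcal F}^\eta)$ is contained in some compact $K$ in the interior of ${\mathbb R}^2_+$, which is the claim.

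\emph{Part (3).} Fix $F\in{\mathcal F}^0$ with density $f$; its hazard rate $h=f/(1-F)$ is increasing, continuous on $[\plbar,\pubar)$, and diverges at $\pubar$. Convolve with a smooth nonnegative kernel $g_\delta$ supported on $[0,\delta]$ whose distribution $G_\delta$ has increasing hazard rate. Because the class of increasing-hazard-rate distributions is closed under convolution, $F_\delta:=F*G_\delta$ again has increasing hazard rate; its density $f*g_\delta$ inherits the smoothness of $g_\delta$, hence is Lipschitz on its compact support with some constant $\eta(\delta)$, so $F_\delta\in{\mathcal F}^{\eta(\delta)}$; and $F_\delta\to F$ uniformly as $\delta\to0$. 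The one real difficulty is the upper endpoint: a monotone hazard rate forces the density to behave essentially affinely near the top of its support (there $h(p)\sim 1/(\pubar-p)$), so a raw convolution produces a density that tapers to $0$ at the new endpoints; to land genuinely inside $\cup_{\eta>0}{\mathcal F}^\eta$ one must either read the positivity requirement on the open support or first modify $f$ on shrinking neighborhoods of $\plbar$ and $\pubar$ (and renormalize) before convolving. Reconciling the Lipschitz requirement with hazard-rate monotonicity exactly at $\pubar$ is the step I expect to demand the most care; parts (1) and (2) are, modulo the closedness remark above, routine.
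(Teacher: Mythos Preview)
The paper explicitly states this lemma \emph{without proof} (``We state the useful properties in a lemma without proof''), so there is no argument to compare against. Your approach is the natural one: Arzel\`a--Ascoli for (1), the maximum theorem combined with (1) for (2), and a mollification that preserves the IHR property for (3). The caveats you raise are exactly the right ones. In (1), sequential compactness of ${\mathcal F}^\eta$ as written (with strict positivity of $f$ and a strictly increasing hazard rate) genuinely fails; the set is only closed if those conditions are relaxed to weak inequalities, and the paper's subsequent use of the lemma (in particular the construction of the compact set ${\sf B}$ in Section~\ref{Construction}) is consistent with reading ${\mathcal F}^\eta$ as the closure. In (3), your convolution argument is sound---log-concave densities, hence IHR distributions, are closed under convolution---but as you note, the resulting density vanishes at the new endpoints, so again one must read the positivity condition on the open support or perform a preliminary truncation/perturbation near $\plbar$ and $\pubar$. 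These are not gaps in your reasoning; they are ambiguities in the paper's definition of ${\mathcal F}^\eta$ that any proof would have to confront.
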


\subsection{The Seller's Objective}

In the equilibrium analysis, the monopolist knows the distribution $F$ of the buyer's valuations.  The seller does not have the prior probability over ${\mathcal F}^\eta$.
Instead, let us assume that the monopolist only knows that the actual distribution is an element of ${\mathcal F}^\eta$: $F\in {\mathcal F}^\eta$.

To incorporate the learning process by the monopolist, we consider a dynamic version of the static model, where the long run monopolist seller is facing a sequence of short run consumers with IID draws of their valuations.  

Time is discrete: $t=1,2,3,\ldots$. In each period, the monopolist post price $p_t$ and $N$ consumers enters the market, each of whom is endowed with reservation value $v_{i,t}$ drawn according to distribution function $F_i(\cdot)$.   Upon entering the market in period $t$, buyer $i$ purchases a single unit of the good paying $p_t$ if $v_{i,t}> p_t$ and leaves the market forever.   Since a buyer in period $t$ plays only once, his dominant strategy is to buy if $v_{i,t}> p_t$.  Define $q_t$ as the number of buyers whose valuation in period $t$ is more than $p_t$, which is the aggregate demand in period $t$.

In each period, the seller observes data $D_t$ which may include  (normalized) quantity $q_t$ and price $p_t$ charged in period $t$.   We differentiate the outcome from the data.   The outcome specifies what a player can observe.   By the data, we mean a subset of the outcome that the decision-maker uses as an input for the decision making process (or algorithm).    If processing information is costly, the decision-maker can choose to ignore some outcomes.   By definition, data is a subset of the outcome.    Let $\dim D_t$ be the number of components in $D_t$, which cannot be larger than the number of components of observed outcomes in period $t$.   The configuration of data is a part of the strategic choice of the monopolist in designing an algorithm.

\subsection{Algorithms}

While the monopolist is lacking a precise information about the actual distribution $F$, the monopolist can generate data $D_t$ in period $t$ to learn $F\in {\mathcal F}^\eta$.    Let ${\mathcal D}_t=(D_1,\ldots,D_{t-1})$ be a history at the beginning of period $t$, and ${\mathcal D}$ be the set of all histories.

Let $\Theta$ be the set of parameters that ``models'' $F\in {\mathcal F}$.
$\theta\in\Theta$ is a forecast (or a model) of the underlying aggregate demand curve.    We use forecast and model interchangeably throughout this paper.   Let $K$ be the number of parameters to model $F\in {\mathcal F}$.

\begin{assumption}
$\Theta$ is a compact subset of $\cup_{K\ge 1}{\mathbb R}^K$.
\end{assumption}

$\Theta$ is closed and bounded. By assuming that $\Theta$ is bounded, we assume that the decision-maker has some knowledge about the underlying demand curve.   If some component of $\theta$ is enormous, he knows the forecast makes no sense and discards the forecast.   On the other hand, it is a technical assumption that $\Theta$ is closed.    We can interpret $K$ as the number of parameters for a model.   We impose no upper bound for the number of parameters, admitting the non-parametric estimator as a feasible model.

We admit that the model of the demand can be misspecified.   If ${\mathcal F}$ is a collection of non-linear functions, $\theta$ may have infinitely many components.   If $\Theta\subset {\mathbb R}^K$ for finite $K$, then the model could be misspecified. As the number of the parameters increases, the model can better approximate the actual demand function.      We call ${\mathcal A}({\mathcal D}_t)$  the forecast conditioned on ${\mathcal D}_t$, or simply, the forecast.

To save the cost of decision making, the seller delegates his decision to an algorithm
\[
  {\mathcal A}: {\mathcal D}\rightarrow \Theta.
\]
The definition of an algorithm is very general.   A recursive estimation of parameters is an algorithm, as the statistical procedure takes the data as an input, and produces estimates of the variables of interest.   The number of components in ${\mathcal A}({\mathcal D}_t)$ or $D_t$ can change over time if the monopolist's forecast about $F$ evolves from a simple linear function to a non-linear function.  

The statistical properties of ${\mathcal A}({\mathcal D}_t)$ can change over time as well.   If the monopolist estimates the parameters of $F$, the second moment of the estimates may decrease as more data become available.   Our formulation of algorithm ${\mathcal A}$ is sufficiently general to cover the construction of new specifications over time as well as the estimation of the parameters.

We impose several conditions on ${\mathcal A}$.   We sometimes write ${\mathcal A}({\mathcal D}_t : F)$ for $F\in {\mathcal F}$ to emphasize that the data is generated by the true $F$.

We impose a series of conditions for a feasible algorithm.

\medskip\noindent {\bf A1.}  Translating function. \\
Since $\theta\in\Theta$ can be arbitrary, ${\mathcal A}$ must accompany a ``translating function''
\[
  \varphi({\mathcal A}({\mathcal D}_t)) =(\varphi_p({\mathcal A}({\mathcal D}_t)),\varphi_q({\mathcal A}({\mathcal D}_t))) =(p_t,q_t)
\]
which translates $\theta$ into the (human-readable) forecast about the optimal price $p_t$ and the expected demand $q_t$ at the optimal price (thus, the maximized expected profit).   For example, if ${\mathcal A}({\mathcal D}_t)$ is the empirical distribution of $F$, then the translation function calculates the estimated optimal price and the expected quantity (thus, estimated expected profit).

Since the goal of the monopolist is to find the optimal price, it is natural that ${\mathcal A}({\mathcal D}_t)$ includes information about the forecast about the optimal price conditioned on ${\mathcal D}_t$.    On the other hand, it is a substantive restriction to require the algorithm should be able to provide information to forecast the expected quantity (and, therefore, the expected profit).   We require that the algorithm should be able to explain how to calculate the optimal price and what would be the consequence of following the forecast from the algorithm.   

\medskip\noindent{\bf A2.}  Lipschitz.\\
We assume that the translating function $\varphi$ is Lipschitz continuous. A slight change in the forecast should not drastically change the forecast of the optimal price and the expected quantity.

\medskip\noindent{\bf A3.} Recursive algorithm.\\
${\mathcal A}$ is a recursive algorithm, if $\exists\Psi$ such that
\[
{\mathcal A}({\mathcal D}_{t+1}) =\Psi ( {\mathcal A}({\mathcal D}_{t}), D_t)
\]
where ${\mathcal D}_{t+1}$ is obtained by concatenating ${\mathcal D}_t$ to $D_t$.
Instead of the entire history, the algorithm needs to remember only the most recent forecast from the algorithm and the market data.   The cost of storing past information is minimized.

\medskip\noindent{\bf A4.}  Uniform learnability. \\
For $F\in {\mathcal F}$, define
$(b^*(F),q^*(F))$ as
\[
b^*(F)=\arg\max_p p(1-F(p))
\]
and $q^*(F)=1-F(b^*(F))$.   Thanks to the increasing hazard rate property, the optimization problem admits a unique solution $b^*(F)$.

\begin{definition} \label{df: general PAC}   ${\mathcal A}$ uniformly learns ${\mathcal F}$ if $\forall\mu>0$, $\forall\lambda\in (0,1)$, $\exists \Tbar(\mu,\lambda)$ such that 
\begin{equation}
  \Prob\left( \left|  \varphi({\mathcal A}(D_{\Tbar(\mu,\lambda)})) -(b^*(F),q^*(F)) 
      \right| \ge  \mu\right) \le \lambda
  \label{eq: general PAC}
\end{equation}
where $\Tbar(\mu,\lambda)\sim \bigO\left( -\frac{\log\lambda}{\mu^p}\right)$ for some $p>0$.
\end{definition}

Our notion of uniform learnability differs from the learnability in the conventional learning models (e.g., \citeasnoun{EvansandHonkapohja00}) in weak convergence, as we require the algorithm should produce an accurate forecast by $T(\mu,\lambda)$ uniformly over ${\mathcal F}^\eta$.   Since the problem is underspecified, the agent learns optimal decisions over ${\mathcal F}^\eta$ instead of an optimal decision for a particular $F\in {\mathcal F}^\eta$.   It is natural to impose uniformity on the learnability.

Given the accuracy bound $\mu$ and the confidence bound $1-\lambda$, the uniform learnability requires us to construct an algorithm ${\mathcal A}$ and stopping time $\Tbar(\mu,\lambda)$ such that the algorithm produces a forecast by period $\Tbar(\mu,\lambda)$ satisfying accuracy and confidence bounds.

Uniform learnability is inspired by the PAC guaranteeing property in the computer science literature (\citeasnoun{Shalev-ShwartzandBen-David14}).

\begin{definition} \label{df: strong PAC}  ${\mathcal A}$ PAC guarantees ${\mathcal F}$ if 
$\forall\mu>0$, $\forall\lambda\in (0,1)$, $\exists \Tbar(\mu,\lambda)$ such that 
\begin{equation}
  \Prob\left( \left|  \varphi({\mathcal A}(D_t)) -(b^*(F),q^*(F)) 
      \right| \ge  \mu\right) \le \lambda \qquad\forall t\ge \Tbar(\mu,\lambda)
  \label{eq: strong PAC}
\end{equation}
where $\Tbar(\mu,\lambda)\sim \bigO\left( -\frac{\log\lambda}{\mu^p}\right)$ for some $p>0$.
\end{definition}

Clearly, if ${\mathcal A}$ PAC guarantees ${\mathcal F}$, ${\mathcal A}$ uniformly learns ${\mathcal F}$.   After completing the analysis, we show how we can change the algorithm slightly to satisfy PAC guaranteeing property in Section \ref{PAC Guarantee}, explaining why we opt for uniform learnability.

Let ${\mathcal D}_t=(D_1,\ldots,D_{t-1})$ be the history at the beginning of period $t$.   Let ${\mathcal D}$ be the set of all histories.  Algorithm ${\mathcal A}$ dictates the monopolist to charge $p_t$ conditioned on ${\mathcal D}_t$.
Let $(p_t,q_t)$ be the pair of realized delivery price and realized quantity in period $t$.  The long-run average payoff from algorithm ${\mathcal A}$ is
\[
{\mathcal U}({\mathcal A})=\lim_{T\rightarrow\infty}\frac{1}{T}\Expect\sum_{t=1}^T p_t q_t
\]
if the limit exists, and 0, otherwise.\footnote{We require that the data complexity increases at the logarithmic speed as the confidence bound converges to 0.  Consequently, the probability of making a forecast outside of $\epsilon$ bound vanishes at the exponential speed as the number of data increases.    If the discount rate is sufficiently small (i.e., the discount factor is sufficiently close to 1), uniformly learning ${\mathcal A}$ generates a long run discounted average payoff close to the true long run discounted average payoff uniformly over ${\mathcal F}^\eta$. }

\subsection{Existence}

\citeasnoun{ColeRoughgarden2014} proved the existence of an algorithm that PAC guarantees ${\mathcal F}^\eta$. While the original algorithm of \citeasnoun{ColeRoughgarden2014} is defined in non-recursive form, the algorithm can easily adapt to a recursive form, which we present as the benchmark for our analysis.\footnote{The result of \citeasnoun{ColeRoughgarden2014} is stronger than what we state here because \citeasnoun{ColeRoughgarden2014} constructed an algorithm that can PAC guarantee ${\mathcal F}^0$ which includes ${\mathcal F}^\eta$. Since $\cup_{\eta>0}{\mathcal F}^\eta$ is a dense subset of ${\mathcal F}^0$, we implicitly assume that $\eta>0$ is large so that the difference between ${\mathcal F}^0$ and ${\mathcal F}^\eta$ is ``small.''}

In each period, the monopolistic seller asks buyer $i\in\{1,\ldots,N\}$ reports his valuation.   Since the truthful revelation is a dominant strategy of a buyer in period $t$, we assume that each buyer in period $t$ reports his valuation $v_{i,t}$ truthfully, because the truthful reporting is a dominant strategy.   Let $D_t=(v_{1,t},\ldots,v_{N,t})$ be the data collected by the monopolist in period $t$ and
\[
  {\hat F}_{i,t}=\frac{1}{t}\# \{ t'\le t \mid v_{i,t'}\le v\}
\]
be the empirical distribution of buyer $i$ based on $t$ observations, where $\#$ is the number of elements in the set.    Given ${\mathcal D}_t=(D_1,\ldots,D_{t-1})$  and $D_t$, the monopolistic seller updates ${\hat F}_{i,t}$ recursively according to
\[
  {\hat F}_{i,t}(v) ={\hat F}_{i,t-1}(v) +\frac{1}{t}\left[
{\mathbb I}( v_{i,t}\le v) - {\hat F}_{i,t-1}(v)    \right]  \qquad\forall v.
\]
The monopolist estimates the (normalized) aggregate demand
\[
{\hat F}_t(v)=\frac{1}{N}\sum_{i=1}^N {\hat F}_{i,t}(v) \qquad\forall v.
\]
Based on ${\hat F}_t$, the monopolists offer $p_t$
that solves
\[
p_t=\arg\max_{p\in[\plbar,\pubar]} p (1-{\hat F}_t(p))
\]
and forecasts the expected amount of sales as
\[
1-{\hat F}_t(p_t).
\]
Let ${\mathcal A}_{CR}$ be the algorithm we just described.
Let us state the main result of  \citeasnoun{ColeRoughgarden2014} adapted to our framework.

\begin{theorem} (\citeasnoun{ColeRoughgarden2014})
  ${\mathcal A}_{CR}$ PAC guarantees, and therefore uniformly learns, ${\mathcal F}^\eta$.
\end{theorem}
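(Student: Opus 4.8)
The proof adapts the empirical-distribution argument behind \citeasnoun{ColeRoughgarden2014} to the recursive formulation here. The plan is: (i) bound the sup-norm distance between the estimated aggregate demand $\hat F_t$ and the truth $F$ by a distribution-free, hence automatically uniform, tail inequality; (ii) transfer this to uniform closeness of the revenue curves $\hat R_t(p)=p(1-\hat F_t(p))$ and $R(p)=p(1-F(p))$; (iii) use a uniform ``sharpness'' property of $R$ at its maximizer to convert closeness of revenue \emph{values} into closeness of the maximizer and of the maximized quantity; and (iv) choose the target accuracy in (i) as a power of $\mu$, so that the stopping time has the form $\bigO(-\log\lambda/\mu^{p})$ and the resulting bound holds for every $t\ge\bar T(\mu,\lambda)$ --- which is precisely Definition \ref{df: strong PAC}; uniform learnability then follows from the implication noted right after that definition.

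For (i), fix buyer $i$; since $v_{i,1},v_{i,2},\dots$ are i.i.d.\ from $F_i$, the Dvoretzky--Kiefer--Wolfowitz inequality gives $\Prob(\sup_{v}|\hat F_{i,t}(v)-F_i(v)|>\delta)\le 2e^{-2t\delta^{2}}$ with a bound independent of $F_i$, hence uniform over ${\mathcal F}^\eta$. A union bound over $i=1,\dots,N$ together with $\hat F_t-F=\tfrac1N\sum_i(\hat F_{i,t}-F_i)$ gives $\Prob(\|\hat F_t-F\|_\infty>\delta)\le 2Ne^{-2t\delta^{2}}$. For (ii), on $\{\|\hat F_t-F\|_\infty\le\delta\}$ we have $|\hat R_t(p)-R(p)|=p\,|\hat F_t(p)-F(p)|\le\pubar\,\delta$ for all $p\in[\plbar,\pubar]$, i.e.\ $\|\hat R_t-R\|_\infty\le\pubar\delta$.

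Step (iii) is the crux. I would first prove a \emph{uniform quadratic-growth bound}: there is $c>0$, depending only on $\eta,\plbar,\pubar$ and the compact set $K$ of Lemma \ref{lm: useful uniform}, with $R(b^*(F))-R(p)\ge c\,|p-b^*(F)|^{2}$ for all $F\in{\mathcal F}^\eta$ and $p\in[\plbar,\pubar]$. Writing $R'(p)=(1-F(p))(1-p\,h(p))$ with $h=f/(1-F)$, property {\sf IH} makes $h$ positive and increasing, so $p\,h(p)$ is increasing, and since $b^*(F)h(b^*(F))=1$ by the first-order condition, $p\,h(p)-1\ge (p-b^*(F))/b^*(F)$ for $p\ge b^*(F)$ (symmetrically for $p\le b^*(F)$). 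Lemma \ref{lm: useful uniform}(2) bounds $b^*(F)\le\pubar$ and bounds $1-F(b^*(F))$ below by some $\underline q>0$ uniformly, and with the uniform Lipschitz bound on $F$ this keeps $1-F(p)\ge\underline q/2$ on a neighborhood of $b^*(F)$ of $F$-independent radius; integrating $R'$ there gives the quadratic bound with $c\sim\underline q/\pubar$, while compactness of ${\mathcal F}^\eta\times[\plbar,\pubar]$ and uniqueness of the maximizer (Lemma \ref{lm: useful uniform}(1)) supply a uniform revenue gap outside that neighborhood. Granting this: if $\|\hat R_t-R\|_\infty\le\xi$, any (near-)maximizer $\hat b_t$ of $\hat R_t$ satisfies $R(b^*(F))-R(\hat b_t)\le 2\xi$, hence $|\hat b_t-b^*(F)|\le\sqrt{2\xi/c}$, and $|(1-\hat F_t(\hat b_t))-q^*(F)|\le\|\hat F_t-F\|_\infty+(\sup_p f(p))\,|\hat b_t-b^*(F)|=\bigO(\sqrt\xi)$. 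Since ${\mathcal A}_{CR}$'s translating function reports exactly $(\hat b_t,1-\hat F_t(\hat b_t))$ and is Lipschitz (A2), we get $\bigl|\varphi({\mathcal A}_{CR}(D_t))-(b^*(F),q^*(F))\bigr|\le C\sqrt\delta$ on $\{\|\hat F_t-F\|_\infty\le\delta\}$ with $C$ uniform over ${\mathcal F}^\eta$. A subtlety to flag: $\hat F_t$ is a step function, not a member of ${\mathcal F}^\eta$, so the structural bound is used only for the true $R$ and transferred through function values --- never applied to $\hat R_t$ itself.

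For (iv), given $\mu,\lambda$, put $\delta=(\mu/C)^{2}$ and $\bar T(\mu,\lambda)=\lceil C^{4}\mu^{-4}\log(2N/\lambda)/2\rceil$. For all $t\ge\bar T(\mu,\lambda)$, the event $\{\|\hat F_t-F\|_\infty\le\delta\}$ has probability at least $1-2Ne^{-2t\delta^{2}}\ge 1-\lambda$, and on it the accuracy bound above holds; since $\bar T(\mu,\lambda)\sim\bigO(-\log\lambda/\mu^{4})$, this is exactly the PAC-guarantee property of Definition \ref{df: strong PAC} with $p=4$, and uniform learnability follows a fortiori. The main obstacle is the uniform quadratic-growth bound of step (iii): the sharpness constant $c$ must be taken independent of $F$ over the infinite-dimensional class ${\mathcal F}^\eta$, and this has to be done with only a Lipschitz (not $C^{1}$) density --- which is exactly where compactness of ${\mathcal F}^\eta$ and the interiority statement of Lemma \ref{lm: useful uniform} do the essential work.
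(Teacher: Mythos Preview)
The paper does not actually prove this theorem: it is stated as an adaptation of the main result of \citeasnoun{ColeRoughgarden2014}, and the only accompanying text is a one-paragraph informal remark that the empirical distributions converge pointwise a.e.\ and satisfy a large-deviation property so the estimation error vanishes at an exponential rate. There is no argument in the paper for how closeness of $\hat F_t$ to $F$ translates into closeness of $(\hat b_t,1-\hat F_t(\hat b_t))$ to $(b^*(F),q^*(F))$, nor for uniformity over ${\mathcal F}^\eta$, nor for the polynomial-in-$1/\mu$ sample complexity; all of that is deferred to the cited source.

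Your proposal therefore goes well beyond what the paper supplies, and it does so along sound lines. Step (i) via DKW is exactly the distribution-free exponential tail the paper gestures at with ``large deviation property,'' and your union bound over buyers is the right way to pass to the aggregate $\hat F_t$. Step (ii) is immediate. Step (iii) --- the uniform quadratic-growth (strong-concavity) bound on $R$ at $b^*(F)$ --- is the genuinely nontrivial ingredient the paper omits entirely; your derivation from {\sf IH}, the first-order condition, Lemma \ref{lm: useful uniform}(2), and compactness of ${\mathcal F}^\eta$ is the correct route, and you are right to flag that the bound is applied only to the true $R$, never to the step-function $\hat R_t$. Step (iv) gives $\bar T(\mu,\lambda)\sim -\log\lambda/\mu^4$, which matches Definition \ref{df: strong PAC} with $p=4$. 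In short: the paper treats this as a black-box citation; you have supplied a correct self-contained proof whose empirical-distribution half is in the same spirit as the paper's informal remark, and whose argmax-stability half (the quadratic-growth lemma) is the substantive addition.
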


Since $v_t$ is i.i.d., the empirical distribution ${\hat F}_{i,t}(v)$ converges to the true distribution $F_{i}(v)$ pointwise almost everywhere, and satisfies the large deviation property so that the estimation error vanishes at an exponential rate (\citeasnoun{DemboandZeitouni98}).    The estimation method is recursive and efficient.   By the nature of non-parametric estimation, the monopolist's model of the demand is correctly specified.

\subsection{Computational Cost}

${\mathcal A}_{CR}$ assumes away two sources of computational costs.   First, ${\mathcal A}_{CR}$ presumes that the extraction of private information is free.   In the revelation game, each player reports his valuation to the mechanism.    However, the revelation game is a mathematical tool to investigate an actual trading protocol. When we buy a good from a monopolist, we are rarely asked to report our reservation value.   Instead, we pay the posted price or place a bid instead of reporting our reservation value.    To infer the underlying valuation, the monopolistic seller has to invert a buyer's strategy, which is not a trivial exercise for a boundedly rational seller who does not have a complete specification of the demand curve.   To extract private information, the monopolist has to implement a protocol to communicate with the buyers in addition to the existing rule, which can be expensive. It would be more economical if the algorithm relied only on the aggregate outcome from the trading instead of the private and micro-level data such as the reservation value of an individual buyer. An excellent example of aggregate data would be the total quantity traded and the price at which the good is delivered each period.    Price and quantity would be the natural outcomes of any trading protocol.

Second, the estimated distribution ${\hat F}_{i,t}$ is generally a non-linear function, which requires many parameters to identify.    A recursive algorithm uses the estimator from the previous round and needs to store the present forecast for the next period. It would be costly to remember many parameters in each period.
A boundedly rational seller would prefer an alternative algorithm that can PAC guarantees ${\mathcal F}$ which requires the smallest number of parameters to remember.    

We modify the preference of the monopolistic seller, incorporating the complexity cost of an algorithm. Let ${\sf A}$ be the collection of recursive algorithms satisfying ${\bf A1}-{\bf A4}$. The monopolist has a lexicographic preference over the long run average payoff and the input data complexity of an algorithm as in \citeasnoun{Rubinstein86}.    The monopolist chooses an algorithm in ${\sf A}$ which requires minimal input data for the algorithm.

Since ${\mathcal A}$ is a recursive algorithm, the input in period $t$ is $({\mathcal A}({\mathcal D}_{t-1}), D_{t})$.  Define
\[
  \dim ({\mathcal A}({\mathcal D}_{t-1}), D_{t})
=\dim ({\mathcal A}({\mathcal D}_{t-1}))+\dim (D_t)
\]
as the number of variables the algorithm needs in period $t$.   A recursive algorithm has to remember the previous period's forecast ${\mathcal A}({\mathcal D}_{t-1})$, while taking in the new data $D_t$ in period $t$.

For any ${\mathcal A}\in {\sf A}$, we need to specify time $T$ when the algorithm stops to produce the forecast, subject to the accuracy and the confidence bound.    We measure the complexity of ${\mathcal A}$ by the maximum number of variables that algorithm has to remember until the algorithm produces the final forecast at $T$.

\begin{definition}
  Complexity of ${\mathcal A}$ which stops at time $T$ is
  \[
\comp_T({\mathcal A})=\max_{1\le t\le T}  \dim ({\mathcal A}({\mathcal D}_{t-1}), D_{t}).
  \]
\end{definition}

We have to ``reserve'' sufficient memory space to complete an algorithm up front.
We use the size of ``reserved memory'' to measure the complexity of an algorithm.
We interpret ${\mathcal A}({\mathcal D}_{t-1})$ broadly as any variables that the algorithm has to remember in order to produce period $t$ forecast, including the previous forecast and the variables needed to update the forecast.

\begin{example}
In ${\mathcal A}_{CR}$, the empirical distribution of the valuations of buyers is updated in each period.   The algorithm has to keep track of the empirical distribution at the end of period $t-1$, while receiving $K$ number of valuations.   An empirical distribution at the end of period $t-1$ typically requires to remember
the equal number of observations, say $K(t-1)$.  Thus, $\dim ({\mathcal A}_{CR}({\mathcal D}_{t-1})=K(t-1)$ and $\dim D_t=K$.   If the algorithm stops at $T$, $\comp_T({\mathcal A}_{CR})=K(T-1)+K=KT$.
\end{example}

$\dim(D_t)$ measures the number of data that is necessary to produce ${\mathcal A}(D_t)$.
Even if the algorithm uses the entire memory space just one period, we need to secure the memory space throughout the operation of the algorithm.   Thus, the complexity cost is determined by the maximum number of parameters and data during the operation.    

\begin{example}
  Suppose that ${\mathcal A}$ calculates the empirical distribution in a batch mode.   Instead of updating the empirical distribution in each period in response to $K$ new observations, we can calculate the empirical distribution at the end of period $T$.  In each period $t\le T-1$, the algorithm takes $K$ observations of valuations of the buyers.  For $t\le T-1$, $\dim(D_t)=K$ and $\dim ({\mathcal A}(D_t)=0$ since the algorithm produces no forecast.   In period $T$, the algorithm calculates the empirical distribution using $KT$ observations of valuations.    But, $\dim(D_T)=KT$, because the algorithm requires to read all existing data to produce ${\mathcal A}(D_T)$.   Thus, $\comp_T({\mathcal A})\ge KT$.
\end{example}

Recall that ${\mathcal U}({\mathcal A})$ is the long run average expected payoff generated by algorithm ${\mathcal A}$.  The monopolist searches the optimal price through an algorithm
\begin{equation}
  \max_{{\mathcal A}\in {\sf A}} {\mathcal U}({\mathcal A})
  \label{eq: long run}
\end{equation}
subject to the constraint that if for any ${\mathcal A}'\in {\sf A}$,
$\max_{T\ge 1}\comp_T({\mathcal A}') < \max_{T\ge 1}\comp_T ({\mathcal A})$, then
\[
{\mathcal U}({\mathcal A}) > {\mathcal U}({\mathcal A}').
\]
Given that a recursive algorithm can achieve an optimal solution in the long run, no other algorithm with simpler input data complexity can achieve the same long-run payoff as ${\mathcal A}$.

\citeasnoun{ColeRoughgarden2014} proved that the non-parametric estimation algorithm of the distribution uniformly learns ${\mathcal F}^\eta$.   Important questions are whether a simpler algorithm, in particular a parametric estimation algorithm, can uniformly learn ${\mathcal F}^\eta$, and if one exists, we aim to identify the simplest algorithm.\footnote{We admit the possibility that more than one simplest algorithm may exist.}

We first show that any algorithm in ${\sf A}$ that can PAC guarantee ${\mathcal F}^\eta$ must take at least two observations from the market, such as a pair of the delivery price and the aggregate realized demand.

\begin{proposition}  \label{pr: lower bound}
If ${\mathcal A}\in {\sf A}$ solves \eqref{eq: long run}, then $\dim D_t \ge 2$ and $\dim{\mathcal A}({\mathcal D}_{t-1})\ge 2$, implying $\comp_T({\mathcal A})\ge 4$ $\forall T\ge 1$.
\end{proposition}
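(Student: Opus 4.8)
The plan is to prove the two inequalities $\dim\mathcal{A}(\mathcal{D}_{t-1})\ge 2$ and $\dim D_t\ge 2$ for every $\mathcal{A}\in\mathsf{A}$ (in particular for any solution of \eqref{eq: long run}) and then add them, since $\comp_T(\mathcal{A})=\max_{1\le t\le T}\bigl(\dim\mathcal{A}(\mathcal{D}_{t-1})+\dim D_t\bigr)$. The single fact driving both bounds is that the object the algorithm must reproduce --- the optimal pair $(b^*(F),q^*(F))$ as $F$ ranges over $\mathcal{F}^\eta$ --- fills a genuinely two-dimensional region.

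First I would record this two-dimensionality. Let $\mathcal{T}=\{(b^*(F),q^*(F)):F\in\mathcal{F}^\eta\}$. By Lemma~\ref{lm: useful uniform}(2) it lies in a compact set $K$ in the interior of $\mathbb{R}^2_+$, but I claim it has nonempty interior. The cheapest witnesses are (near-)affine demand curves: a constant density automatically satisfies the increasing--hazard--rate and $\eta$--Lipschitz requirements for every $\eta>0$, and for $1-F(p)=\alpha-\beta p$ on the relevant range the revenue $\alpha p-\beta p^2$ is maximized at $(b^*,q^*)=(\alpha/(2\beta),\alpha/2)$; letting $(\alpha,\beta)$ vary over a small open set --- reshaping $F$ only near $\plbar$ and $\pubar$ so that it stays a genuine c.d.f.\ in $\mathcal{F}^\eta$ without moving the maximizer --- sweeps out an open subset of $\mathcal{T}$. (Alternatively: perturb one fixed smooth base curve in two directions that move the first--order condition and the attained value independently, via the implicit function theorem.)

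Next, $\dim\mathcal{A}(\mathcal{D}_{t-1})\ge 2$. Fix $F\in\mathcal{F}^\eta$. By A4 and Definition~\ref{df: general PAC}, for every $\mu>0$ there is a date at which $\bigl|\varphi\bigl(\mathcal{A}(\mathcal{D}_{\Tbar(\mu,\lambda)}:F)\bigr)-(b^*(F),q^*(F))\bigr|<\mu$ with probability at least $1-\lambda>0$, so \emph{some} realized forecast $\theta\in\Theta$ has $|\varphi(\theta)-(b^*(F),q^*(F))|<\mu$. Since $\Theta$ is compact and $\varphi$ is Lipschitz (A2), $\varphi(\Theta)$ is closed, and letting $\mu\downarrow 0$ gives $(b^*(F),q^*(F))\in\varphi(\Theta)$; hence $\varphi(\Theta)\supseteq\mathcal{T}$. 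Now if $\dim\mathcal{A}(\mathcal{D}_{t-1})\le 1$ at every period up to the stopping time, then every forecast the algorithm ever outputs lies in a subset of $\Theta$ of dimension at most one, whose Lipschitz image has zero two-dimensional Lebesgue measure --- contradicting $\varphi(\Theta)\supseteq\mathcal{T}$ together with the nonempty interior of $\mathcal{T}$. So the maintained model must carry at least two parameters. (One point of care: the output dimension is allowed to vary with the history, so the contradiction is cleanest when phrased in terms of the set of \emph{all} reachable forecasts, or of the dimension at $\Tbar(\mu,\lambda)$.)

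Finally, $\dim D_t\ge 2$, which is the delicate step. Suppose $\dim D_t\le 1$. If $D_t$ is a deterministic function of the posted price alone, then, because $p_t=\varphi_p(\mathcal{A}(\mathcal{D}_t))$ is generated by the algorithm itself, $D_t$ conveys nothing about $F$: the forecast process is then independent of $F$ and cannot approach $(b^*(F),q^*(F))$ uniformly over the two-dimensional family $\mathcal{T}$, contradicting A4. Otherwise $D_t$ records a scalar function of the realized quantity $q_t$; but the recursive update $\Psi(\mathcal{A}(\mathcal{D}_{t-1}),D_t)$ can convert that observation into a point of the demand curve only if the price that produced it is also available, and that price is not recoverable from the carried state $\mathcal{A}(\mathcal{D}_{t-1})$ (which encodes the \emph{previous} round's forecast) together with a single scalar --- so the per-period data must carry a second coordinate, naturally $D_t=(p_t,q_t)$. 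Making this second case fully rigorous --- ruling out every one-parameter encoding of the outcome as insufficient to separate, uniformly, a two-parameter demand family through a recursive update --- is the main obstacle; I expect the argument to run through A3 together with the two-dimensionality of Step~1. Combining the bounds, $\dim\mathcal{A}(\mathcal{D}_{t-1})+\dim D_t\ge 4$ at every period, so $\comp_T(\mathcal{A})\ge 4$ for all $T\ge 1$; the bound is tight, attained by the two-parameter affine--demand algorithm (Theorem~\ref{thm:mainresult}).
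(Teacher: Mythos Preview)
Your argument for $\dim\mathcal{A}(\mathcal{D}_{t-1})\ge 2$ is essentially the paper's: show that the target set $\{(b^*(F),q^*(F)):F\in\mathcal{F}^\eta\}$ is genuinely two-dimensional (the paper, like you, witnesses this with affine demand curves), then use that a Lipschitz $\varphi$ cannot raise Hausdorff dimension, so a one-dimensional $\Theta$ cannot cover it. Your treatment of the price-only case $D_t=\{p_t\}$ is also the paper's: the data then carries no information about $F$, so the forecast is the same across all $F$ while the target varies.

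The gap is your quantity-only case. You argue that with $D_t=\{q_t\}$ the update cannot interpret the observation because ``that price is not recoverable from the carried state $\mathcal{A}(\mathcal{D}_{t-1})$.'' This is exactly backwards in the paper's framework: by \textbf{A1} the translating function gives $\varphi_p(\mathcal{A}(\mathcal{D}_t))=p_t$, so the carried state \emph{does} determine the posted price. Your heuristic therefore does not rule out learning from $q_t$ alone, and you correctly flag that making this case rigorous is the obstacle---but the specific reason you offer is wrong. The paper does not supply a detailed argument here either; it disposes of $D_t=\{q_t\}$ by asserting that ``the proof\ldots follows the symmetric argument'' to the $D_t=\{p_t\}$ case (i.e., exhibit two distributions in $\mathcal{F}^\eta$ that the observed scalar cannot separate yet have distinct optima). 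If you want to close the gap in the paper's spirit, that is the line to pursue: construct $F,F'\in\mathcal{F}^\eta$ generating indistinguishable quantity processes along the algorithm's price path but with $(b^*(F),q^*(F))\ne(b^*(F'),q^*(F'))$, rather than arguing about recoverability of $p_t$.
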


\begin{proof}
  See Appendix \ref{Proof of Proposition pr: lower bound}.
\end{proof}

\section{Construction}
\label{Construction}

We construct a recursive algorithm with $\dim({\mathcal A}({\mathcal D}_{t-1})) =\dim(D_t)=2$ $\forall t\ge 1$.

\subsection{Linear Recursive Algorithm}

$D_t=(q_t,p_t)$ is the pair of quantity $q_t$ and the price $p_t$ charged in period $t$.   Recall that $q_t$ is a random variable with $\Expect q_t=1-F(p_t)$.  We can write
\[
q_t =1-F(p_t) +\epsilon_{2,t}.
\]
$\Expect\epsilon_{2,t}=0$ and $\Expect\epsilon^2_{2,t}<\infty$ uniformly, but the actual size of the second moment can depend on $p_t$ and $F\in {\mathcal F}$.

Let
\[
{\mathcal D}_t =( D_1,\ldots,D_{t-1})
\]
be the history at the beginning of period $t$.   The monopolist assumes that the aggregate demand is a linear function:
\[
q =\beta_0+\beta_1 p
\]
and estimates $(\beta_0,\beta_1)$ according to the least square estimation over ${\mathcal D}_t$.   Let ${\mathcal H}$ be the set of all linear demand functions, parameterized by $(\beta_0,\beta_1)$.

Since $F\in {\mathcal F}^\eta$, the optimal solution $b^*(F)$ must generate a positive profit.   Thanks to the uniform bound $\eta$, there exists a compact set $K$ in the interior of ${\mathbb R}^2_+$ such that $b^*(F)\in K$ $\forall F\in {\mathcal F}^\eta$.
Thus, $(\beta_0,\beta_1)$ must be such that the optimal price and the expected quantity under the linear demand curve parameterized by $(\beta_0,\beta_1)$ must be contained in $K$.

Recall that if $(\beta_0,\beta_1)$ can support $b^^*(F)$ for some $F\in{\mathcal F}^\eta$,
\[
1-F(b^*(F)) =\frac{\beta_0}{2} \ \ \text{and} \ \ b^*(F)= -\frac{\beta_0}{2\beta_1}
\]
or equivalently,
\[
\beta_0 =2(1-F(b^*(F))   \ \ \text{and} \ \ \beta_1=-\frac{1-F(b^*(F))}{b^*(F)}.
\]
Since $(1-F(b^*(F)),b^*(F))\in K$, there exists a compact set ${\sf B}\subset (0,\infty)\times (-\infty,0)$ such that $(\beta_0,\beta_1)\in {\sf B}$ if the linear demand can support a true optimal price.    If $(\beta_0,\beta_1)\not\in {\sf B}$, then the monopolist can conclude that the estimated demand is wrong, based on what the monopolist knows.

Let ${{\mathcal H}}_{{\sf B}}\subset {\mathcal H}$ be the collection of the linear demand, which induces the market outcome in $K$.    The seller knows that the market demand curve is in ${{\mathcal H}}_{{\sf B}}$. Since ${\mathcal F}$ typically contains non-linear demand curves, ${{\mathcal H}}_{{\sf B}}$ is a misspecified model.    Nevertheless, the number of parameters the seller has to keep track of is minimal.

Since the seller does not know $(\beta_0,\beta_1)$, the seller recursively estimates the parameters using the least square estimation method while choosing the pricing rule based on the estimated linear demand curve.
Let $(\beta_{0,t-1},\beta_{1,t-1})$ be the least square estimator at the end of period $t-1$.   Given the estimated demand curve
\[
q=\beta_{0,t-1}+\beta_{1,t-1}p,
\]
the monopolist calculates the optimal price
\[
-\frac{\beta_{0,t-1}}{2\beta_{1,t-1}}
\]
but incurs an implementation error $\epsilon_{1,t}$  so that the actual price in period $t$ is
\[
p_t = -\frac{\beta_{0,t-1}}{2\beta_{1,t-1}}+\epsilon_{1,t}
\]
where $\epsilon_{1,t}$ is i.i.d. with $\Expect\epsilon_{1,t}=0$ and $\Expect\epsilon^2_{1,t}=\sigma^2_1$.
We interpret $\epsilon_{1,t}$ as the implementation error or the small experimentation by the monopolist seller.    We choose $\epsilon_{1,t}$ from a small interval $[-\epsilon,\epsilon]$ according to a fixed distribution, say the uniform distribution.   We control the size of $\epsilon>0$ to achieve the desired level of accuracy of the algorithm.

The monopolist forecasts that the sales quantity will be
\[
  \beta_{0,t-1}+\beta_{1,t-1}\left[ -\frac{\beta_{0,t-1}}{2\beta_{1,t-1}}+\epsilon_{1,t}\right]
  =\frac{\beta_{0,t-1}}{2} +\beta_{1,t-1}\epsilon_{1,t}
\]
but the actual quantity is
\[
1-F\left( -\frac{\beta_{0,t-1}}{2\beta_{1,t-1}}+\epsilon_{1,t}  \right)+\epsilon_{2,t}.
\]
The forecasting error is
\[
\phi(\beta_{t-1},\epsilon_t)
=1-F\left( -\frac{\beta_{0,t-1}}{2\beta_{1,t-1}}\right)+\epsilon_{1,t}  +\epsilon_{2,t} -
\frac{\beta_{0,t-1}}{2}-\beta_{1,t-1}\epsilon_{1,t}
\]
where $\beta_{t-1}=(\beta_{0,t-1},\beta_{1,t-1})$ and $\epsilon_t=(\epsilon_{1,t},\epsilon_{2,t})$.

Since the quantity must be in the closed interval $[0,1]$, we first take care of the cases of ``corner solution'' before moving to ``interior solution.''
If actual quantity
\[
  q_t=1-F\left( -\frac{\beta_{0,t-1}}{2\beta_{1,t-1}}\right)+\epsilon_{1,t}
\]
is at the boundary, we directly update $(\beta_{0,t-1},\beta_{1,t-1})$.
If $q_t=0$, then the algorithm concludes that the forecast price was too high and adjusts accordingly:
\[
\beta_{0,t}=\beta_{1,t}-a \  \ \text{and} \ \ \beta_{1,t}=\beta_{1,t-1}
\]
so that
\[
b_{t+1}=-\frac{\beta_{0,t}}{2\beta_{1,t}}=-\frac{\beta_{0,t-1}}{2\beta_{1,t-1}}+\frac{a}{2\beta_{1,t-1}}=b_t+\frac{a}{2\beta_{1,t-1}} < b_t,
\]
where small $a>0$ is a parameter of the algorithm that will be determined to satisfy the precision and the confidence requirements.   Similarly, if $q_t=1$, 
\[
\beta_{0,t}=\beta_{1,t}+a \  \ \text{and} \ \ \beta_{1,t}=\beta_{1,t-1}
\]
so that
\[
b_{t+1}=b_t -\frac{a}{2\beta_{1,t-1}} > b_t.
\]


If $0< q_t <1$, then 
\begin{equation}
  \left[
\begin{matrix}
  \beta_{0,t} \\
  \beta_{1,t} 
\end{matrix}
  \right] =
    \left[
\begin{matrix}
  \beta_{0,t-1} \\
  \beta_{1,t-1} 
\end{matrix}
\right]  + a_t R^{-1}_{t-1}
  \left[
\begin{matrix}
  1 \\
  -\frac{\beta_{0,t-1}}{2\beta_{1,t-1}} +\epsilon_{1,t}
\end{matrix}
  \right] \phi(\beta_{t-1},\epsilon_t)
  \label{eq: LSE}
\end{equation}
where
\[
R_{t-1} =
  \left[
\begin{matrix}
  1  &    -\frac{\beta_{0,t-1}}{2\beta_{1,t-1}} \\
  -\frac{\beta_{0,t-1}}{2\beta_{1,t-1}}  & \left(-\frac{\beta_{0,t-1}}{2\beta_{1,t-1}}\right)^2+\sigma_1^2
\end{matrix}
  \right].
\]
Since the monopolist designs the size of the experiments, the variance $\sigma^2_1$ of $\epsilon_{1,t}$ is a known parameter.

We need to impose a bound to $(\beta_{0,t},\beta_{1,t})$ to keep the estimator within a compact set.   Let ${\mathcal B}$ be a compact convex set that contains ${\sf B}$ in the interior of ${\mathcal B}$ so that the Hausdorff distance between ${\mathcal B}$ and ${\sf B}$ is positive.    If $(\beta_{0,t},\beta_{1,t})\not\in {\mathcal B}$, then the seller can conclude that the estimator is out of the line and needs to adjust the estimator by pushing it back to ${\sf B}$.\footnote{This mapping is known as the projection facility in the literature of the stochastic approximation (\citeasnoun{KushnerandYin97}).}

We modify the baseline updating scheme to construct the formal updating scheme for $(\beta_{0,t}\beta_{1,t})$.
\[
  \left[
\begin{matrix}
  \beta_{0,t} \\
  \beta_{1,t} 
\end{matrix}
  \right] =
    \left[
\begin{matrix}
  \beta_{0,t-1} \\
  \beta_{1,t-1} 
\end{matrix}
\right]  + a_t R^{-1}_{t-1}
  \left[
\begin{matrix}
  1 \\
  -\frac{\beta_{0,t-1}}{2\beta_{1,t-1}} +\epsilon_{1,t}
\end{matrix}
  \right] \phi(\beta_{t-1},\epsilon_t)
\]
if the right hand side is in ${\mathcal B}$.   Otherwise, 
$(\beta_{0,t},\beta_{1,t}) =(\beta_0,\beta_1)\in {\sf B}$ for some fixed $(\beta_0,\beta_1)$ in the interior of ${\sf B}$.   
To simplify notation, we write
\begin{equation}
\varphi_{t-1}\equiv\varphi(\beta_{t-1},\epsilon_t)=R^{-1}_{t-1}
  \left[
\begin{matrix}
  1 \\
  -\frac{\beta_{0,t-1}}{2\beta_{1,t-1}} +\epsilon_{1,t}
\end{matrix}
  \right] \phi(\beta_{t-1},\epsilon_t).
\label{eq: varphi def}
\end{equation}
Treating the estimated demand curve
\[
q =\beta_{0,t-1}+\beta_{1,t-1}p
\]
as the actual demand curve, the seller sets the price
\[
p_t=-\frac{\beta_{0,t-1}}{2\beta_{1,t-1}}+\epsilon_{1,t}
\]
where $\epsilon_{1,t}$ is an i.i.d. white noise
uniformly distributed over $[-\epsilon,\epsilon]$ for a fixed $\epsilon>0$ so that
$\Expect \epsilon_{1,t}=$ and $\Expect\epsilon^2_{1,t}=\sigma^2_1$.   Given $p_t$, the quantity in period $t$
\[
q_t=1-F(p_t)+\epsilon_{2,t}
\]
is realized.   Using $(q_t,p_t)$, the seller updates $(\beta_{0,t-1},\beta_{1,t-1})$ to $(\beta_{0,t},\beta_{1,t})$.
The translating function 
\[
  \phi(\beta_{0,t-1},\beta_{1,t-1})=
  \left(
    -\frac{\beta_{0,t-1}}{2\beta_{1,t-1}},
    1-F\left(    -\frac{\beta_{0,t-1}}{2\beta_{1,t-1}}+\epsilon_{1,t}\right)
  \right)
\]
maps the estimated linear demand curve into the mean forecast price and quantity.

We need to specify one remaining parameter of the algorithm: gain function $a_t$.  Instead of $a_t=1/t$ as in the recursive least square estimation, we set $a_t=a>0$\footnote{The resulting algorithm is often called the constant gain least square estimation algorithm. } which we already used to control the algorithm when $q_t=0$ or $q_t=1$. 

 We have to choose $a$ to achieve the desired level of accuracy and confidence.   Let ${\mathcal A}_a$ be the recursive algorithm with constant gain $a_t=a>0$ $\forall t\ge 1$.  $\forall a\ge 0$, algorithm ${\mathcal A}_a$ produces $\beta_t$ following history ${\mathcal D}_t$, where
\[
{\mathcal D}_t=( (q_1,p_1),\ldots,(q_{t-1},p_{t-1}))
\]
is the sequence of aggregate market outcomes up to period $t-1$.  The constructed algorithm is recursive:  ${\mathcal A}_a({\mathcal D}_t)=\beta_t$ is the output of the algorithm based on $D_t$ and ${\mathcal A}_a({\mathcal D}_{t-1})$.   The input complexity 
\[
\dim ({\mathcal A}_a({\mathcal D}_{t-1}))=2 \ \text{and} \ \dim(D_{t-1})=2 \qquad\forall t\ge 1,
\]
so that $\comp_T({\mathcal A}_a)=4$ $\forall T\ge 1$.

To emphasize that the optimal price $b^*$ is a function of the underlying (aggregate) distribution $F$, we sometimes write $b^*(F)$ instead of $b^*$.   Let $\beta^*(F)$ be the pair of estimators that induce the optimal price for $F$:
\[
b^*(F)=-\frac{\beta^*_0(F)}{2\beta^*_1(F)} \ \ \text{and} \ \ q^*(F)=1-F(b^*(F)).
\]

\begin{theorem}  \label{thm:mainresult}
$\forall\mu>0$,  $\exists \tau(\mu)>0$, $\rho>0$ and ${\overline a}>0$ such that $\forall a\in (0,{\overline a})$, if $T=\lceil \frac{\tau(\mu)}{a}\rceil$ 
$\forall a\in (0,{\overline a})$,
\[
\Prob\left( \left| \varphi\left({\mathcal A}_a({\mathcal D}_{T(a,\mu)})\right)-(b^*(F),q^*(F))\right|>4\mu \right) \le e^{-\rho T(a,\mu)}
\qquad\forall F\in {\mathcal F}^\eta
\]
and $\tau(\mu)\sim -\log \mu$ for small $\mu>0$ and
${\overline a}=\bigO\left( -\frac{\mu}{\log\mu} \right)$.
\end{theorem}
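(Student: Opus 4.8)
The plan is to treat ${\mathcal A}_a$ as a constant-gain stochastic approximation scheme and invoke the standard ODE/averaging machinery (\citeasnoun{KushnerandYin97}) together with a large-deviation estimate for the tracking error, uniformly over $F\in{\mathcal F}^\eta$. First I would identify the mean vector field: writing $\beta_{t}=\beta_{t-1}+a\,\varphi_{t-1}$ with $\varphi_{t-1}$ as in \eqref{eq: varphi def}, define $g_F(\beta)=\Expect_{\epsilon}[\varphi(\beta,\epsilon)]$, the conditional expectation over the implementation noise $\epsilon_{1,t}$ and the demand noise $\epsilon_{2,t}$ holding $\beta$ fixed. The key algebraic step is to show that $g_F(\beta)=0$ admits a zero $\beta^*(F)$ at which the induced price equals $b^*(F)$; this follows because the forecasting error $\phi(\beta,\epsilon)$ is, to first order, the gap between the true demand $1-F(\cdot)$ and the linear model at the estimated optimal price, and the regressor $R_{t-1}^{-1}(1,-\beta_0/2\beta_1+\epsilon_1)^\top$ is the usual least-squares direction, so the rest point is exactly the pseudo-true linear fit whose optimizer coincides with $b^*(F)$ — this is where the increasing-hazard-rate hypothesis and the single-crossing of marginal revenue are used to guarantee uniqueness of $b^*(F)$ and a strictly negative definite Jacobian $Dg_F(\beta^*(F))$, with the negative-definiteness bounded away from zero uniformly in $F$ by compactness of ${\mathcal F}^\eta$ (Lemma \ref{lm: useful uniform}(1)) and the Lipschitz bound $\eta$.

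Second, I would establish a uniform contraction/Lyapunov estimate: with $V_F(\beta)=|\beta-\beta^*(F)|^2$, the drift satisfies $\langle \nabla V_F(\beta), g_F(\beta)\rangle \le -c\,V_F(\beta)$ for $\beta$ in the compact set ${\mathcal B}$ (using the projection facility to keep iterates in ${\mathcal B}$, so there are no escape issues), with $c>0$ independent of $F\in{\mathcal F}^\eta$. The noise term $\xi_t=\varphi_{t-1}-g_F(\beta_{t-1})$ is a bounded martingale difference sequence (bounded because $\epsilon_{1,t}\in[-\epsilon,\epsilon]$, $\epsilon_{2,t}$ has uniformly bounded second moment, $\beta\in{\mathcal B}$, and $R_{t-1}^{-1}$ is uniformly bounded since $\beta_1$ is bounded away from $0$ on ${\mathcal B}$). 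Then $\Expect V_F(\beta_T)\le (1-ac)^T V_F(\beta_0)+ a\cdot O(1)$, i.e. the mean-square error decays geometrically at rate $(1-ac)^T$ down to an $O(a)$ floor. Choosing $T=\lceil \tau(\mu)/a\rceil$ with $\tau(\mu)\sim-\log\mu$ makes the transient term $\le\mu^2$, and the residual $O(a)$ bias is $\le\mu^2$ provided $a\le\bar a=O(-\mu/\log\mu)$ (the $-\log\mu$ factor enters because the number of steps, hence the accumulated $\epsilon_1$-experimentation bias of order $a\cdot(\#\text{steps})=\tau(\mu)$, must itself be controlled — this is the origin of the stated rate for $\bar a$).

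Third, to upgrade the mean-square bound to the exponential-probability bound in the statement, I would apply a martingale concentration inequality (Azuma–Hoeffding or a Freedman-type bound) to the sum $\sum_{t\le T} (1-ac)^{T-t}\xi_t$, whose increments are bounded by $C(1-ac)^{T-t}$; the effective variance proxy is $\sum_t C^2(1-ac)^{2(T-t)}=O(a)$, so $\Prob(|\beta_T-\beta^*(F)|>2\mu)\le 2\exp(-c'\mu^2/a)$. Since $a\le\bar a$ forces $\mu^2/a \ge c''\,T(a,\mu)\cdot(\text{const})$ up to the logarithmic factor, this is $\le e^{-\rho T(a,\mu)}$ for a suitable $\rho>0$. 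Finally, the Lipschitz continuity of the translating function $\varphi$ (assumption {\bf A2}) transfers the bound on $|\beta_T-\beta^*(F)|$ to a bound on $|\varphi({\mathcal A}_a({\mathcal D}_T))-(b^*(F),q^*(F))|$, absorbing the Lipschitz constant into the factor $4$ on the right-hand side; the corner-solution updates ($q_t\in\{0,1\}$) are handled separately by noting they move $b_t$ monotonically toward the interior by a fixed step, so they occur only finitely often before the iterate enters the interior regime, uniformly over ${\mathcal F}^\eta$.

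The main obstacle I anticipate is the \emph{uniformity} over ${\mathcal F}^\eta$ of the constant $c$ in the Lyapunov contraction — equivalently, a uniform lower bound on $-Dg_F(\beta^*(F))$. This requires showing that the pseudo-true linear fit's optimal price depends continuously (indeed Lipschitz-continuously, via $\eta$) on $F$, that $\beta^*(F)$ stays in the interior of ${\mathcal B}$ with a uniform margin, and that the second-order condition for $b^*(F)$ — which comes from strictly decreasing marginal revenue — degrades gracefully when one replaces the true demand by its best linear surrogate near $b^*(F)$. Compactness of ${\mathcal F}^\eta$ reduces this to continuity of the relevant functionals, but verifying that continuity (in particular that $g_F$ and $Dg_F$ vary continuously in $F$ in the sup-norm topology, given the Lipschitz constraint) is the delicate part of the argument and is presumably where the bulk of the appendix proof goes.
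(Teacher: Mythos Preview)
Your high-level plan---constant-gain stochastic approximation, mean ODE, martingale concentration---matches the paper, but the execution differs in one structural way and you are missing the idea that resolves precisely the obstacle you flagged at the end.

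\medskip
\textbf{Decomposition.} Rather than running a Lyapunov recursion $\Expect V_F(\beta_T)\le(1-ac)^T V_F(\beta_0)+O(a)$, the paper splits $\beta_{T(a)}-\beta^*(F)$ additively into three pieces: (i) the deterministic ODE flow $\beta(0)+\int_0^\tau \bbar(\beta(s))ds-\beta^*(F)$; (ii) the Riemann residual $a\sum_t\bbar(\beta_{t-1})-\int_0^\tau\bbar(\beta(s))ds$; (iii) the pure noise $a\sum_t\xi_t$. Piece (ii), bounded by $\eta' a\tau(\mu)/2$ via uniform Lipschitz continuity of $\bbar$, is what produces $\bar a=O(-\mu/\log\mu)$---not an accumulated experimentation bias as you suggest. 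Piece (iii) is an \emph{unweighted} average of bounded martingale differences, so Azuma--Hoeffding--Bennett applies directly with $\mu'=\mu/(\tau(\mu)\xi)$; your geometrically-weighted version would also work but is not what the paper does.

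\medskip
\textbf{The scalar reduction.} The obstacle you correctly identify---a uniform lower bound on the contraction rate over ${\mathcal F}^\eta$---is \emph{not} handled by bounding $-Dg_F(\beta^*(F))$ and invoking compactness. A negative-definite Jacobian at the rest point gives only local contraction, and pushing your quadratic $V_F$ globally on ${\mathcal B}$ in the two-dimensional $\beta$-space is awkward because $g_F$ is nonlinear in $\beta$ through $b=-\beta_0/(2\beta_1)$. The paper sidesteps this entirely: it differentiates the identity $\beta_0+2\beta_1 b=0$ along the ODE flow to obtain a \emph{scalar} ODE
\[
\dot b = -\frac{f(b)}{2\beta_1}\left[\frac{1-F(b)}{f(b)}-b\right],
\]
and then the increasing hazard rate property gives directly that the bracket is globally single-crossing at $b^*(F)$ with slope at most $-1$. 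The uniform constant $c>0$ comes from a separate lemma (Lemma \ref{lm: c exists}) bounding the prefactor $-f(b)/(2\beta_1)$ away from zero uniformly over ${\mathcal F}^\eta$; this uses that $f(b^*(F))$ equals the slope of the iso-profit curve at the optimum and hence is uniformly bounded below, together with the $\eta$-Lipschitz bound to extend this to a neighborhood. So the delicate uniformity you anticipated is resolved by a change of variables to the one-dimensional price, not by a continuity-plus-compactness argument on $Dg_F$ in $\beta$-space.
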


\begin{proof}
See Appendix \ref{Proof of Theorem thm:mainresult}.
\end{proof}

For fixed $\lambda>0$ less than 1, we can choose ${\overline a}>0$ sufficiently small so that
\[
e^{-\frac{\rho \tau(\mu)}{{\overline a}}} = \lambda
\]
and therefore,
\[
\frac{\tau(\mu)}{{\overline a}} = -\frac{\log\lambda}{\rho}.
\]
Since
\[
T({\overline a},\mu)=\left\lceil \frac{\tau(\mu)}{{\overline a}} \right\rceil =\left\lceil -\frac{\log\lambda}{\rho}\right\rceil,
\]
$T({\overline a},\mu)$ increases at the logarithmic speed with respect to $1/\lambda$.
An important observation is that the estimator's accuracy depends only on $\tau(\mu)$, and the approximation error vanishes at the linear rate of $a\rightarrow 0$.   Thus, the minimum number of the time steps to satisfy the accuracy requirement increases at the rate of $-\frac{\log\mu}{a}$.   Theorem \ref{thm:mainresult} implies that ${\mathcal A}_a$ uniformly learns ${\mathcal F}^\eta$.   

\section{Heuristics}
\label{Heuristics}

Instead of a formal proof, let us provide a heuristic explanation about how to prove Theorem \ref{thm:mainresult}. An important implication of the increasing hazard rate property is that the optimal price $b^*(F)$ associated with $F$ is unique and completely determined by the first order condition
\begin{equation}
\frac{1-F(b^*(F))}{f(b^*(F))}-b^*(F) =0.
\label{eq: first order condition}
\end{equation}
Suppose that the monopolist knows $F\in {\mathcal F}^\eta$ for some $\eta>0$, even though the monopolist does not know precisely what $F$ is and, therefore, does not know $b^*(F)$.
Recursively estimating $F$ non-parametrically requires the algorithm to remember the empirical distribution in each round.   If $F$ is non-linear, the number of parameters to remember can be large and increase as the monopolist observes more data.
To save the cost of memory, the monopolist assumes the market demand curve is
\[
q =\beta_0+\beta_1 p
\]
and estimates $(\beta_0,\beta_1)$ recursively according to the least square estimation as defined by \eqref{eq: LSE}. Since $F\in{\mathcal F}^\eta$ is typically non-linear, the linear model of the monopolist is misspecified.    On the other hand, linear demand is the simplest model to formulate a non-trivial downward-sloping market demand function.

Let $(\beta_{0,t-1},\beta_{1,t-1})$ be the estimator at the end of period $t-1$.
The monopolist updates the estimator to  $(\beta_{0,t},\beta_{1,t})$ after she observes price $p_t$ and quantity $q_t$ in period $t$.   The optimal price associated with linear demand $q=\beta_{0,t-1}+\beta_{1,t-1}p$ is
\[
b_t =-\frac{\beta_{0,t-1}}{2\beta_{1,t-1}}.
\]
Given $(\beta_{0,t-1},\beta_{1,t-1})$, the monopolist charges price according to 
\begin{equation}
p_t=b_t+\epsilon_{1,t}.
\label{eq: random price}
\end{equation}
The monopolist experiments a little by adding a small noise $\epsilon_{1,t}$ which is drawn from a binomial distribution over $\{-\epsilon,\epsilon\}$ for small $\epsilon>0$.\footnote{Any distribution with a mean 0 and a small variance will do.   The binary distribution is the worst case to obtain the confidence bound. }
Given $b_t$, the quantity forecast is
\[
\beta_{0,t-1}+\beta_{1,t-1}p_t
\]
but the actual demand is
\[
q_t=1-F(p_t)+\epsilon_{2,t}.
\]
The forecasting error is 
\[
1-F(p_t) -\left[ \beta_{0,t-1}+\beta_{1,t-1}p_t\right]+\epsilon_{2,t}
\]
is
\[
\phi(\beta_{t-1},\epsilon_t) =1-F(b_t+\epsilon_{1,t}) -\left[ \beta_{0,t-1}+\beta_{1,t-1}b_t +\beta_{1,t-1}\epsilon_{1,t}\right]
\]
as in \eqref{eq: LSE}.  
Let us write \eqref{eq: LSE} more concisely as
\[
  \beta_t=\beta_{t-1}+a \varphi(\beta_{t-1},\epsilon_t)
\]
where $\varphi$ is defined in \eqref{eq: varphi def}.
The monopolist adjusts $(\beta_{0,t},\beta_{1,t})$ according to the value of $\phi_t$ and $(\beta_{0,t-1},\beta_{1,t-1})$.

For fix $\tau>0$, define
\[
T(a,\tau)=\left\lceil \frac{\tau}{a}\right\rceil.
\]
Following the control theory literature (\citeasnoun{KushnerandYin97}), we interpret $\tau$ as the (clock) time, and $a>0$ as the time interval between two adjacent observations of $(q_t,p_t)$ and $(q_{t-1},p_{t-1})$.   Thus, $T(a,\tau)$ as the number of time steps that can be ``squeezed'' into $\tau>0$ (clock) time. 
We are interested in the dynamics of $\beta_t$ over small $\tau>0$ when the monopolist can observe data very frequently:
\[
\frac{\beta_{t+T(a,\tau)}-\beta_t}{\tau}
\]
with $\beta_t=\beta_0$,
whose property can be inferred by taking limits:\footnote{Under the conditions we have imposed, a limit exists (cf. \citeasnoun{KushnerandYin97}).}
\[
\lim_{\tau\rightarrow 0}\lim_{a\rightarrow 0}\frac{\beta_{t+T(a,\tau)}-\beta_t}{\tau}.
\]
Using the recursive nature of the algorithm, we can write
\[
\beta_{t+T(a,\tau)}-\beta_t =\frac{\tau}{T(a,\tau)}\sum_{k=1}^{T(a,\tau)}\varphi_{t+k}
\]
which implies
\begin{eqnarray*}
\frac{\beta_{t+T(a,\tau)}-\beta_t}{\tau}
& = &
\frac{1}{T(a,\tau)}\sum_{k=1}^{T(a,\tau)}\varphi_{t+k} \\
& = & \frac{1}{T(a,\tau)}\sum_{k=1}^{T(a,\tau)}\Expect_{t+k-1}\varphi_{t+k}
+\frac{1}{T(a,\tau)}\sum_{k=1}^{T(a,\tau)}\varphi_{t+k}-\Expect_{t+k-1}\varphi_{t+k}     
\end{eqnarray*}
where $\varphi_{t+k}$ is defined in \eqref{eq: varphi def}.   Define
\begin{equation}
\xi_{t+k}=\phi_{t+k}-\Expect_{t+k-1}\varphi_{t+k}
\label{eq: xitk}
\end{equation}
which is a martingale difference.

Let us examine the first term when $a\rightarrow 0$ for a small fixed $\tau>0$.
$\beta_{t+T(a,\tau)}-\beta_t$ is a discounted average of $\{\phi_{t+k}\}_{k=1}^{T(a,\tau)}$, which converges to its simple average as $a\rightarrow 0$.   We can approximate
\[
\lim_{a\rightarrow 0} \frac{\beta_{t+T(a,\tau)}-\beta_t}{\tau} \simeq \Expect \left[ \varphi_{t} \mid \beta_t=\beta\right]
\]
or more concisely as
\[
{\dot b}=\Expect \left[ \varphi_{t} \mid \beta_t=\beta \right].
\]
To derive the formula of the right hand side, let us fix $(\beta_{0,t},\beta_{1,t})=(\beta_0,\beta_1)\equiv \beta$ and calculate the expected value of the estimator in the ``next period'' if the monopolist chooses the estimator to minimize the forecasting error.   Given $(\beta_0,\beta_1)$, the next period's quantity $q'$ and price $p'$ are
\[
  (q',p')=\begin{cases}
    \left(b+\epsilon, 1-F\left(b+\epsilon \right)  \right)
    & \text{with probability} \ 0.5 \\
    \left(b-\epsilon, 1-F\left(b-\epsilon \right)  \right)
    & \text{with probability} \ 0.5.
\end{cases}
\]
where
\begin{equation}
b = -\frac{\beta_0}{2\beta_1}.
\label{eq: best response}
\end{equation}
The monopolist chooses the new coefficients $(\beta'_0,\beta'_1)$ to fit the observed data best,  the new regression line must pass through
$\left(b+\epsilon, 1-F\left(b+\epsilon \right)  \right)$  and
$\left(b-\epsilon, 1-F\left(b-\epsilon \right)  \right)$.    A simple calculation shows
\begin{eqnarray*}
\beta'_0 & = & 1- F\left(b\right) +bf(b) \\ 
\beta'_1 & = & -f(b)
\end{eqnarray*}
modulo linear approximation error at the order of $\epsilon^2$.   
The stochastic approximation theory (\citeasnoun{KushnerandYin97}) shows that the asymptotic properties of the mean of $(\beta_{0,t},\beta_{1,t})$ is dictated by the dynamic properties of the associated ordinary differential equation (ODE)
\begin{eqnarray}
{\dot \beta}_0 & = & \beta'_0- \beta_0 = 1-F(b)+b f(b) -\beta_0 \label{eq: error ODE} \\
{\dot \beta}_1 & = & \beta'_1 -\beta_1 = -f(b)-\beta_1. \nonumber
\end{eqnarray}
Since \eqref{eq: best response} holds in every period, we take the time derivative on both sides of the equality to have
\[
{\dot b} =-\frac{1}{2\beta_1} \left( {\dot \beta}_1 +2b {\dot \beta_0}\right).
\]
After substituting ${\dot\beta}_1$ and ${\dot\beta}_0$ by \eqref{eq: error ODE}, we have
\begin{equation}
  {\dot b}= -\frac{f(b)}{2\beta_1} \left[   \frac{1-F(b)}{f(b)}-b\right].
  \label{eq: first ODE}
\end{equation}
Since the demand curve $1-F(p)$ is strictly decreasing, $\beta_1<0$.   Thus, $-\frac{f(b)}{2\beta_1}>0$.   The term inside the bracket has a unique solution $b^*(F)$, the profit-maximizing price for (actual) distribution $F$.    By the increasing hazard rate property, the term in the bracket is strictly decreasing for $b$, which makes $b^*(F)$ a stable stationary solution of \eqref{eq: first ODE}.   The stochastic approximation implies that the least square learning algorithm weakly converges to $b^*(F)$ (\citeasnoun{KushnerandYin97}).

So far, we only prove the convergence ``pointwise'' for $F$, allowing the number of data needed to achieve the desired level of accuracy can depend on $F$.   To show the uniform convergence over ${\mathcal F}^\eta$, we need to do additional work.   Since $\frac{1-F(b^*(F))}{f(b^*(F))}-b^*(F)=0$,
\[
\frac{1-F(b)}{f(b)}-b = \frac{1-F(b)}{f(b)}-b -\left(\frac{1-F(b^*(F))}{f(b^*(F))}-b^*(F)\right).
\]
With the increasing hazard rate property,
\[
\frac{1-F(b)}{f(b)}-b -\left(\frac{1-F(b^*(F))}{f(b^*(F))}-b^*(F)\right)\le -(b-b^*(F)). 
\]
We can show that $\exists c>0$ such that\footnote{If we assume that $\inf_pf(p)>0$ uniformly, the proof is straightforward.   Without this assumption, we need some additional work (Lemma \ref{lm: c exists}).}
\[
{\dot b}={\dot {(b-b^*(F))}} \le -c (b-b^*(F))<0
\]
if $b > b^*(F)$.    Similarly, if $b< b^*(F)$, then
\[
  {\dot b} \ge -c (b-b^*(F))>0
\]
for any $F\in {\mathcal F}^\eta$.
Since $b^*(F)\le[\plbar,\pubar]$ $\forall F\in{\mathcal F}^\eta$, the initial condition of the ordinary differential equation can be selected from a compact set.   Since the distance $|b-b^*(F))|$ vanishes uniformly at the order of $e^{-c \tau}$, it takes $\bigO(-\log\mu)$ amount of time for $b$ to enter the $\mu$ neighborhood of $b^*(F)$.   From this observation, we prove that the amount of data to approximate the optimal price is uniform over ${\mathcal F}^\eta$.    We can also show that the number of data to achieve $\mu$ accuracy increases at the polynomial rate of $1/\mu$.

To calculate the confidence bound, we need to examine the distribution of
\[
\frac{1}{T(a,\tau)}\sum_{k=1}^{T(a,\tau)} \xi_{t+k}
\]
where $\xi_{t+k}$ is defined as \eqref{eq: xitk}.
The average converges to 0, but we need to find $\rho>0$ such that
\[
  \Prob\left(
\left| \frac{1}{T(a,\tau)}\sum_{k=1}^{T(a,\tau)} \xi_{t+k}   \right| >\mu
    \right)\le e^{-\rho T(a,\tau)} \qquad\forall F\in {\mathcal F}^\eta.
\]
This part of the exercise is to calculate the tail portion of the probability distribution of $\xi_{t+k}$.   For a fixed $F$, the existence of $\rho>0$ can be proved by the large deviation properties (\citeasnoun{DemboandZeitouni98}) of a recursive algorithm (\citeasnoun{DupuisandKushner89}).   Our exercise is more challenging because we are searching for $\rho>0$ {\em uniformly} over the set of feasible distributions.

The algorithm of \citeasnoun{ColeRoughgarden2014} uses the valuation of the buyers, which are drawn independently from the same distribution.  In that case, we could invoke the large deviation property of the IID sample average, such as Chernoff's bound, to prove that the tail probability vanishes at the exponential rate uniformly over ${\mathcal F}^\eta$.

In our case, the algorithm uses $(q_t,p_t)$ which is not IID (not even martingale), because of $p_t=b_t+\epsilon_{1,t}$ and $b_t$ is responding to the realized sequence of data.  The data generating process is endogenous, making the stochastic process $(q_t,p_t)$ highly non-stationary.   As a result, $\xi_{t+k}$ is not IID but a martingale difference. We need to invoke the Azuma-Hoeffding-Bennett inequality (\citeasnoun{DemboandZeitouni98}) to calculate the uniform exponential rate for all feasible distributions of buyer's valuation, which proves that our algorithm is efficient (\citeasnoun{Shalev-ShwartzandBen-David14}).

Through a sequence of linear demand curves, the seller \emph{can} still learn how to choose the optimal price at an exponential rate, \emph{even though} they may be grossly misspecified in terms of the demand curves that they consider.   Thus, within a polynomial time, the monopolistic seller behaves as if he knows the actual demand curve.    Even though a non-parametric estimation of the demand curve is feasible, as in \citeasnoun{ColeRoughgarden2014}, the monopolist chooses to use a simple yet misspecified model of the demand curve to choose his price to save the computational cost.  

\section{Numerical Experiments}
\label{Numerical Experiments}

While our algorithm ${\mathcal A}_a$ has the same asymptotic properties as the algorithm ${\mathcal A}_{CR}$ (\citeasnoun{ColeRoughgarden2014}), we need to rely on the numerical analysis to compare the performance of the algorithm with a finite number of data.    It is possible to write the algorithm of \citeasnoun{ColeRoughgarden2014} in a recursive form.
Their algorithm is best implemented in an off-line form because a recursive formulation of the algorithm of \citeasnoun{ColeRoughgarden2014} requires storing the empirical distribution of the valuation after the $t$ period, which almost always requires remembering $t$ data points.  

Instead of ${\mathcal F}^\eta$, we generate the set of feasible distributions over the buyer's valuations from a truncated Gaussian distribution at the mean, where the Gaussian distribution has a mean of 10.   By changing the standard deviation of the Gaussian distribution, we can generate different distributions of the valuations, each of which satisfies the increase hazard rate property and other regularity properties of ${\mathcal F}^\eta$.    We select 5000 different standard deviations, ranging from 11 to 16, with an increment of 0.001.   For each distribution of valuations, we calculate the actual optimal price.   As the standard deviation becomes larger, so does the actual optimal price.   We assume that there are 100 buyers whose reservation value is drawn from the same distribution $F$. The actual amount of delivery for a price is thus random, whose mean is $1-F(p)$.  

For each distribution, we run our algorithm ${\mathcal A}_a$ with $a=0.0001$ and $\epsilon=0.75$ (the size of price perturbation) for $T=300,000$ rounds.   At the end of $T$ rounds, we calculate the forecast price and compare it to the optimal price $b^*(F)$ of the actual distribution to calculate the forecasting error for each $F$.   We generate the distribution of forecasting error over the set of feasible distribution functions.\footnote{The distribution can be approximated as a Gaussian distribution which is a solution of the Ornstein-Ullenbeck equation (cf.\citeasnoun{KushnerandYin97}).}   For a small $a>0$, the distribution of the forecasting error has a mean 0 with a small variance. We found the mean is $0.0081$, and the variance is $0.0027$.      Because of the linear approximation of a non-linear demand curve, the linear approximation error contributes to the forecasting error, which vanishes as we reduce the size of price perturbation $\epsilon>0$.     We plot the distribution of forecasting error as a blue bar in a histogram in Figures \ref{fig: same size} and \ref{fig: five times} in blue bars.  The horizontal axis shows the forecasting error at $T$.
The heights of each part represent the number of distributions whose forecasting error is within the $0.01$ neighborhood of each grid.

We calculate the optimal price forecast from \citeasnoun{ColeRoughgarden2014} by drawing $K$ numbers of valuations each period for $T$ rounds to calculate the empirical distribution (thus, $KT$ samples of valuations) and the optimal price from the empirical distribution.    We calculate the forecast error, and plot the distribution  in Figure \ref{fig: same size} and \ref{fig: five times} in orange color.    In Figure \ref{fig: same size}, $K=2$ so that ${\mathcal A}_{CR}$ uses the same number of data as ${\mathcal A}_a$ in $T$ rounds.   In Figure \ref{fig: five times}, $K=10$ so that ${\mathcal A}_{CR}$ uses five times as many data as ${\mathcal A}_a$.

\begin{table}[h]
  \centering
  \begin{tabular}{lSS}
    \toprule
    \multirow{2}{*}{Per Period} &
      \multicolumn{2}{c}{ ${\mathcal A}_{CR}$ } \\
      & {Mean} & {Variance}   \\
      \midrule
    {$2$} & {$-0.0039$} & {$0.0102$}   \\
    {$4$} & {$-0.0002$} & {$0.0064$}   \\
    {$6$} & {$-0.0018$} & {$0.0048$}   \\
    {$8$} & {$0.0001$} & {$0.0041$}   \\
    {$10$} & {$0.0003$} & {$0.0035$}  \\
    \bottomrule
  \end{tabular}
  \caption{The first column reports the number of valuations the algorithm of Cole and Roughgarden [2014] takes per period.   For ${\mathcal A}_{CR}$, we draw different number of samples, which changes the mean and the variance of the forecasting errors.   The true profit maximizing price is roughly ranging from 10 to 20.}
  \label{tb: summary}
\end{table}

Table \ref{tb: summary} reports means and variances from the numerical exercises. 
Because the non-parametric estimation method in \citeasnoun{ColeRoughgarden2014} does not incur any linear approximation error, the mean of the forecasting error is closer to 0.   Interestingly, the variance of forecasting errors is larger for the same number of data as ${\mathcal A}_a$, which takes two data points (price and quantity) in each period.   If \citeasnoun{ColeRoughgarden2014} allows the algorithm to receive ten valuation reports (which is five times as much as data ${\mathcal A}_a$ uses), the variance becomes comparable to the variance of the forecasting error of ${\mathcal A}_a$.

Figure \ref{fig: same size} reports the distribution of the forecasting errors of ${\mathcal A}_a$ in blue and ${\mathcal A}_{CR}$ in orange when ${\mathcal A}_{CR}$ takes two valuation reports per period.    Note that the distribution of forecasting errors of $CR$ is spread out more than $CL$.    Figure \ref{fig: five times} reports the distribution of forecasting errors when $CR$ takes ten reported values in each period.    The two distributions of the forecasting errors become closer, as Table \ref{tb: summary} indicates.

\begin{figure}[ht]
\centering
\includegraphics[width=0.8\linewidth,height=3in]{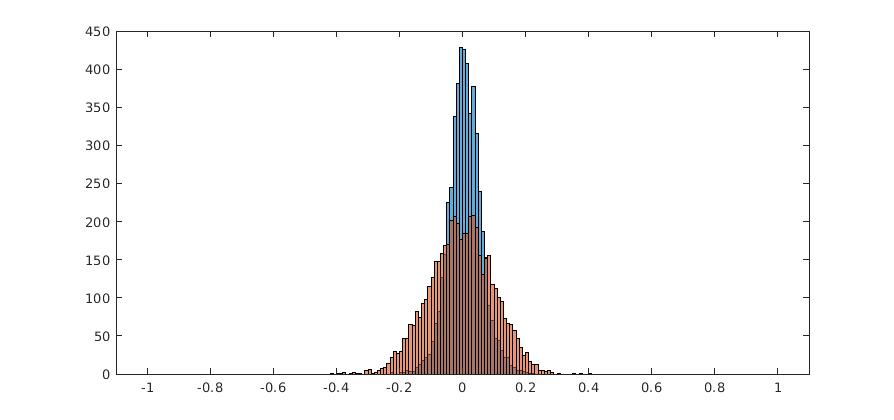}
\caption{Blue bars represent the density of forecasting errors of our algorithm ${\mathcal A}_a$ and orange ones represent the distribution of errors of the algorithm of Cole and Roughgarden [2014].  If the algorithm of Cole and Roughgarden [2014] takes 2 reports per period, the variance is 3.5 times as large as that of ${\mathcal A}_a$. }
  \label{fig: same size} 
\end{figure}

\begin{figure}[ht]
  \centering
  \includegraphics[width=0.8\linewidth,height=3in]{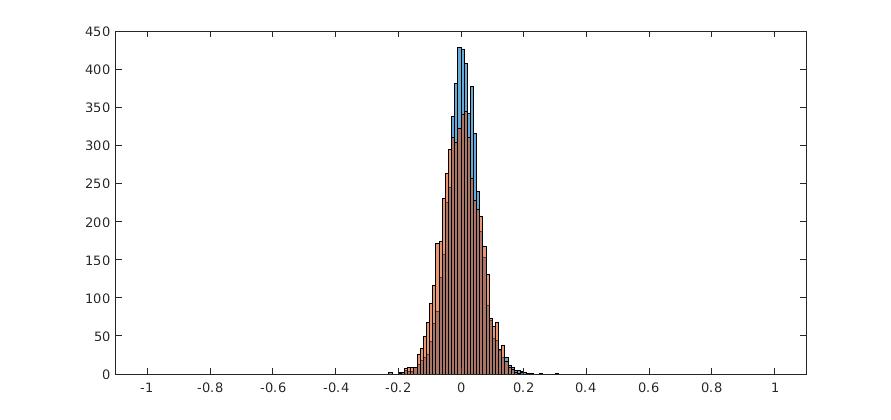}
  \caption{If the algorithm of Cole and Roughgarden [2014] takes 10 reports per period, the variance is comparable to that of ${\mathcal A}_a$.  The distributions of the forecasting errors almost overlap with each other. }
  \label{fig: five times}
\end{figure}

The performance of our algorithm ${\mathcal A}_a$ is comparable to that of \citeasnoun{ColeRoughgarden2014}.   We can reduce the linear approximation error by reducing the price perturbation, choosing a smaller $\epsilon>0$.   We can reduce the forecasting error variance by reducing the gain $a>0$.    On the other hand,  \citeasnoun{ColeRoughgarden2014} can reduce the variance of the forecasting error simply by collecting more data about the valuations.

The key advantages of our algorithm are the source of data and the recursive nature of the algorithm.   Instead of extracting the buyer's private information, our algorithm relies on the public outcome of the market in each period.    In each period, the algorithm needs to remember just two parameters of the linear demand. In contrast, in \citeasnoun{ColeRoughgarden2014}, the algorithm has to remember the empirical distribution, which essentially requires remembering the total number of reported valuations.

\subsection{PAC Guarantee}
\label{PAC Guarantee}

We can modify ${\mathcal A}_a$ to satisfy the PAC guaranteeing property \eqref{eq: strong PAC}.  Uniform learnability requires that the algorithm weakly converges to the true solution uniformly over ${\mathcal F}$, while PAC guaranteeing property requires uniform convergence in probability.

We need to modify the algorithm so that ${\mathcal A}({\mathcal D}_t)$ converges to the true optimal solution in probability.   Instead of a positive constant gain function $a>0$, we can choose $a=\frac{1}{t^\omega}$ where $\omega \in (0,1)$.  Let ${\mathcal A}_{\frac{1}{t^\omega}}$ be the modified algorithm often called a decreasing gain algorithm.

Invoking a well known result in the stochastic approximation (\citeasnoun{DupuisandKushner89}), we can show that \eqref{eq: strong PAC} holds ``pointwise'' with respect to $F\in {\mathcal F}$.    Following the same argument as the proof of Theorem \ref{thm:mainresult}, we obtain the uniform convergence to show that 
${\mathcal A}_{\frac{1}{t^\omega}}$ PAC learns ${\mathcal F}^\eta$ where
$T(\mu,\lambda)\sim \bigO\left( -\frac{\log\lambda}{\mu^{3-2\omega}}\right)$.

While \eqref{eq: strong PAC} is a stronger notion of convergence than \eqref{eq: general PAC}, we chose ${\mathcal A}_a$ over ${\mathcal A}_{\frac{1}{t^\omega}}$ for two practical reasons.   First, we have to specify the stopping time of an algorithm to produce a forecast which the monopolist can use.    When the algorithm reaches $T(\mu,\lambda)$ rounds, the two algorithms produce a strong forecast with minimal difference.    Second, the changing gain function $1/t^\omega$ could be considered as an additional parameter the algorithm should remember.   Thus, one might argue that our algorithm has three, instead of two, parameters to remember.

\section{Concluding Remarks}
\label{Concluding Remarks}

In the monopolistic market, the buyer's decision problem is simple.   The truthful revelation of his valuation is a dominant strategy, and the myopic decision to buy is an optimal choice.  Even though the monopolist is facing strategic buyers, the monopolist's decision problem is reduced to a single person optimization problem.    An important extension of our exercise is to examine the duopoly market, where two firms complete through closely related products, say substitutes.    The decision problem of each duopolist interacts with each other.  The choice of the model of the other firm's behavior is determined as an equilibrium outcome.

\appendix
\footnotesize

\section{Proof of Proposition \ref{pr: lower bound}}
\label{Proof of Proposition pr: lower bound}

We first prove that $\dim D_t \ge 2$.   Since the market outcome in period $t$ is $(p_t,q_t)$, we show that if the algorithm uses only one of the two elements, the algorithm cannot PAC guarantee ${\mathcal F}$.   We only show that if $D_t=\{p_t\}$, the algorithm fails to PAC guarantee ${\mathcal F}$.    The proof for the case where $D_t=\{q_t\}$ follows the symmetric argument.

Let $F$ and $F^\alpha$  be the distribution functions of the uniform distribution over intervals $[{\underline p}, {\underline p}+1]$ and  $[{\underline p}+\alpha, {\underline p}+\alpha +1]$.   Note that the optimal price of $F^\alpha$ is
  \[
b^*(F^\alpha)=\frac{1+{\underline p}+\alpha}{2}.
  \]
  Suppose that $D_t=\{p_t\}$.   Then,
  \[
    {\mathcal A}({\mathcal D}_t: F^\alpha)={\mathcal A}({\mathcal D}_t: F)
    \qquad\forall\alpha.
  \]
Both  ${\mathcal A}({\mathcal D}_t: F^\alpha)$ and ${\mathcal A}({\mathcal D}_t: F)$ have the same input and, therefore, generate the identical forecast for any $F^\alpha$.
  \begin{equation}
    \varphi_p({\mathcal A}({\mathcal D}_t: F^\alpha))=\varphi_p({\mathcal A}({\mathcal D}_t: F))
    \qquad\forall\alpha, \forall t\ge 1.
    \label{eq: identical forecast}
  \end{equation}
  where $\varphi$ is the translating function of the forecast of ${\mathcal A}$ to the optimal price.
Since ${\mathcal A}$ PAC guarantees ${\mathcal F}$,
\[
  \lim_{t\rightarrow\infty}\varphi_p({\mathcal A}({\mathcal D}_t: F^\alpha))
  =b^*(F^\alpha)=\frac{1+{\underline p}+\alpha}{2} \qquad\forall\alpha
\]
with probability 1.
If so, \eqref{eq: identical forecast} cannot hold.

Next, we show that $\dim ({\mathcal A}({\mathcal D}_{t-1})) \ge 2$.
Suppose that $\dim ({\mathcal A}({\mathcal D}_{t-1})) =1$.     Let
\[
\theta_t={\mathcal A}({\mathcal D}_{t})\in {\mathbb R} 
\]
by the hypothesis of the proof.     Since $\Theta$ is compact, $\{\theta_t\}$ has a convergent subsequence.  After re-numbering the subsequence, let
\[
\theta_t\rightarrow \theta^*\in\Theta.
\]
Since ${\mathcal A}$ PAC guarantees ${\mathcal F}$, $\forall F\in {\mathcal F}$,
\[
\theta_t={\mathcal A}({\mathcal D}_t: F) \rightarrow \theta^*
\]
implies that
\[
\varphi(\theta_t)\rightarrow\varphi(\theta^*)=(b^*(F), q^*(F)).
\]
 We can therefore consider the following set: 

\begin{equation} 
  \{(a, b) \mid \exists \theta\in \Theta \text{ such that } a = \varphi_p(\theta), b =\varphi_q(\theta) \}
  \label{eq: image of forecast}
\end{equation}

$\theta\in {\mathbb R}$ is single-dimensional by the hypothesis of the proof.
Lipschitz continuity of $\varphi$ implies that the Hausdorff dimension of the set $\Theta$ cannot increase under the parameterization (see, for instance, Proposition 3.1.5 of \citeasnoun{AmbrosioTilliTextbook}). Hence this set is one dimensional.   

Since ${\mathcal A}$ PAC guarantees ${\mathcal F}$, \eqref{eq: image of forecast} is exactly
\begin{equation} 
\{(a, b) \mid \exists F\in{\mathcal F} \ \text{ such that } a = b^{*}(F), b =1-F(b^{*}(F)) \} \label{eq:contradicteq}
\end{equation}

To obtain the contradiction, it suffices to observe that \eqref{eq:contradicteq} is in fact not one dimensional. This can be seen using linear demand curves (or the uniform distribution function $F$) which is a subset of ${\mathcal F}$. That is, if $q=b_{0}+ \beta_{1} p$, then $q^{*} = \frac{\beta_{0}}{2}$ and $p^{*} = - \frac{ \beta_{0}}{2 \beta_{1}}$. Since for every fixed $\beta_{0}$, (a) the set of $(q^{*}, p^{*})$ generated by some $\beta_{1}$ is a one dimensional set, (b) these sets are also disjoint for distinct $\beta_{0}$, and (c) the set of valid $\beta_{0}$ is itself one dimensional, we therefore have that the set in (\ref{eq:contradicteq}) is two dimensional for this class of $\mathcal{F}$. This contradiction establishes that there cannot be a single-dimensional parameter which is used in the algorithm. 

\section{Proof of Theorem \ref{thm:mainresult}}
\label{Proof of Theorem thm:mainresult}

The projection facility is only used to ensure the tightness of the set of the sample paths. It does not alter the asymptotic properties such as the stability and the large deviation properties of the algorithm  (\citeasnoun{DupuisandKushner89}).     Thanks to the projection facility, we can assume that $(\beta_{0,t},\beta_{1,t})$ is contained in a compact convex set.  For the remainder of the paper, we suppress the projection facility to simplify the exposition when we examine the asymptotic properties of the algorithm. 

\subsection{Preliminaries}

Fix $\tau>0$ and consider an interval $[0,\tau)$ of real-time.    Fix small $a>0$, and divide the interval into subintervals of size $a$, with a possible exception of the last subinterval.    Define
\[
T(a)=\left\lceil \frac{\tau}{a} \right\rceil
\]
be the number of the subintervals (treating the last subintervals as the full size subinterval) in $[0,\tau)$, where $a$ is the gain coefficient of the updating term in the recursive formula. Recall that $\epsilon_{2,t}$ is distributed uniformly over $[-\epsilon,\epsilon]$ where $\epsilon>0$.
We will choose $\tau, a,\epsilon$ to meet the algorithm's accuracy and confidence requirement, which in turn determines $T(a)$.

For a fixed $F\in {\mathcal F}^\eta$, we are interested in
\[
\beta_{T(a)}-\beta^*(F).
\]
For $t\ge 1$, we can write the recursive formula as
\[
\beta_t =\beta_{t-1}+a \varphi(\beta_{t-1},p_t,\epsilon_t)
\]
since the updating term is determined by the old estimate $\beta_{t-1}$, the price in period $t$ and the realized quantity, where the last two variables are subject to two shocks $(\epsilon_{1,t},\epsilon_{2,t})$.
Let
\[
\varphi(\beta_{t-1},p_t,\epsilon_t) =\Expect_{t-1}\varphi(\beta_{t-1},p_t,\epsilon_t) +\xi_t
\]
where $\xi_t$ is the martingale difference.   Since $\beta_t\in {\sf B}$ which is compact, $\xi_t$ is uniformly bounded: $\exists\xi>0$ such that
\[
\abs{\xi_t}\le \xi.
\]
Define
\[
\bbar_{t-1}(\beta_{t-1})=\Expect_{t-1}\varphi(\beta_{t-1},p_t,\epsilon_t).
\]
As shown in \eqref{eq: simple calculation}, the functional form of $b_{t-1}$ is not affected by $t-1$ and is a Lipschitz continuous function of  $\beta_{t-1}$.   To simplify notation, we write
$\bbar(\beta_{t-1})$ in place of $\bbar_{t-1}(\beta_{t-1})$, dropping the time subscript from $\bbar_{t-1}$.   We can write the recursive formula as
\[
\beta_t =\beta_{t-1}+a \left[ \bbar(\beta_{t-1})+\xi_t \right].
\]
Given $\beta_0,\beta_1,\ldots,\beta_{T(a)}$, define a continuous time process obtained by the linear interpolation: $\forall s\in [a(t-1),at)$,
\[
\beta^a(s)=\frac{(s-a(t-1))\beta_t+(at-s)\beta_{t-1}}{a}.
\]
Define
\[
\beta(s)=\lim_{a\rightarrow 0}\beta^a(s)
\]
pointwise. The existence of the limit point is guaranteed by the fact that $\beta_t$ is contained in a compact set and $\bbar$ is a Lipschitz continuous function.

Define $\beta^*(F)=(\beta^*_0(F),\beta^*_1(F))$ as the intercept and the slope of a linear demand curve that generates the optimal price $b^*(F)$ and the expected quantity $1-F(b^*(F))$, that solves
\[
  1-F(b^*(F))=\frac{\beta^*_0(F)}{2} \ \ \text{and} \ \
  f(b^*(F))=-\beta_1^*(F).
\]
We can write
\begin{eqnarray}
\beta_{T(a)}-\beta^*(F) 
  & = & \beta_0 +a\sum_{t=1}^{T(a)} \bbar(\beta_{t-1}) +a\sum_{t=1}^{T(a)} \xi_t -\beta^*(F) \nonumber \\
  & = & \beta_0+\int_0^\tau \bbar(\beta(s))ds -\beta^*(F) \label{eq: ODE term}  \\
  && \ \ \ \ \
     +a\sum_{t=1}^{T(a)} \bbar(\beta_{t-1}) - \int_0^\tau \bbar(\beta(s))ds \label{eq: Riemann} \\
  && \ \ \ \ \
     +a\sum_{t=1}^{T(a)}\xi_t  \label{eq: noise} 
\end{eqnarray}

We examine \eqref{eq: ODE term}, \eqref{eq: Riemann} and \eqref{eq: noise} one by one.

\subsection{Convergence and Stability}

We can write
\[
\beta(\tau)=\beta(0)+\int_0^\tau \bbar(\beta(s))ds
\]
where $\beta(0)=\beta_0$, which is often written as
\[
{\dot\beta}=\bbar(\beta).
\]
To simplify notation, we write
\[
b_t=-\frac{\beta_{0,t-1}}{2\beta_{1,t-1}} \qquad\forall t\ge 1
\]
and $b(\tau)$ as the continuous process constructed from $b_t$ via linear interpolation $\forall \tau\ge 0$.    If the meaning is clear from the context, we drop $\tau$ to write $b$ instead of $b(\tau)$.   The same convention applies to all other variables such as $\beta_{0,t}$ and $\beta_{1,t}$.

We examine the properties of the ordinary differential equation (ODE):
\[
  {\dot \beta} = R^{-1}
\Expect  \left[
    \begin{matrix}
      1 \\
      -b_t+\epsilon_{1,t}
    \end{matrix}
  \right]
  \phi(\beta_{t-1},\epsilon_{1,t})  
\]
where
\[
  R=\left[
\begin{matrix}
  1  &    b_t  \\
b_t  &  b^2_t +\sigma^2_1 
\end{matrix}
    \right]
\]
and
\[
  \phi(\beta_{t-1},\epsilon_t)
  =1-F\left( b_t+\epsilon_{1,t}  \right)+\epsilon_{2,t} -
\beta_{0,t-1}-\beta_{1,t-1}b_t-\beta_{1,t-1}\epsilon_{1,t}.
\]
Since $\epsilon_{1,t}$ has small support, and $F$ is differentiable, it is more convenient to write
\[
  F\left( b_t+\epsilon_{1,t}  \right)=
  F\left( b_t\right)+f\left( b_t\right)\epsilon_{1,t} +\bigO(\epsilon^2).
\]
We are interested in the column vector
\[
\Expect \left[
    \begin{matrix}
      1 \\
      b_t+\epsilon_{1,t}
    \end{matrix}
  \right]
  \phi(\beta_{t-1},\epsilon_t).
\]
The first component is
\[
  1-  F\left( b_t\right) -\beta_{0,t-1}-\beta_{1,t-1}b_t
  +\bigO(\epsilon^2).
\]
The second component is
\[
\left[-f\left(\frac{\beta_{0,t-1}}{2\beta_{1,t-1}}\right) -\beta_{1,t-1}\right]\sigma^2_1+\bigO(\epsilon^3).
\]
We can write
\begin{equation}
  {\dot\beta}=R^{-1}
  \left[
\begin{matrix}
 1-  F\left( b\right) -\beta_{0}-\beta_{1}b
  +\bigO(\epsilon^2) \\
b\left[ 1-  F\left( b\right) -\beta_{0}-\beta_{1}b\right]-\left[f\left( b \right)+\beta_{1}\right]\sigma^2_1+\bigO(\epsilon^3)
\end{matrix}
  \right].
  \label{eq: simple calculation}
\end{equation}
At the stationary point $b^*(F)$ where the right hand side of ODE vanishes,
\begin{eqnarray*}
  1-F\left( b^*(F) \right) & = & \frac{\beta_0}{2} -\bigO(\epsilon^2) \\
  f\left( b^*(F) \right) & = & -\beta_1 +\frac{\bigO(\epsilon^3)}{\bigO(\epsilon^2)}.
\end{eqnarray*}
Thus, 
\begin{equation}
  \frac{ 1-F\left( b^*(F) \right)}{f\left( b^*(F) \right)} =b^*(F)+\bigO(\epsilon).
  \label{eq: stationary solution}
\end{equation}
To simplify notation, let us ignore $\bigO$ term and treat $b^*(F)$ as the solution of
\begin{equation*}
  \frac{ 1-F\left( b^*(F) \right)}{f\left( b^*(F) \right)} =b^*(F).
\end{equation*}
The uniqueness of the solution is implied by the increasing hazard rate property of $F$.
Since $\forall F\in {\mathcal F}$, $f$ is Lipschitz continuous 
\[
|f(x)-f(x')| \le \eta |x-x'|,
\]
$|\bigO(x)| \le \eta |x|$.
By reducing the size of the support of $\epsilon_{1,t}$, we can achieve the desired level of accuracy uniformly.

Let us proceed with the calculation after suppressing $\bigO$ terms.   Note that
\[
  R^{-1}=\frac{1}{\sigma^2_1}
\left[
\begin{matrix}
b^2 +\sigma^2_1   &    -b  \\
              -b  &  1
\end{matrix}
    \right]
\]
We write ODE of $\beta=(\beta_0,\beta_1)$ without linear approximation error $\bigO$ as
\[
  \left[ \begin{matrix}
  {\dot \beta}_0 \\
  {\dot \beta}_1
\end{matrix} \right]
  =
\left[
\begin{matrix}
(1-F(b)-\beta_0-\beta_1 b) +b (f(b)+\beta_1) \\
-(f(b)+\beta_1)
\end{matrix}
  \right].
\]
By definition of $b$
\[
\beta_0+2\beta_1 b=0
\]
at every moment of time.   Thus,
\[
{\dot\beta}_0+2{\dot\beta}_1 b +2\beta_1{\dot b}=0.
\]
After substituting ${\dot \beta}_0$ and ${\dot\beta}_1$, we have
\begin{equation}
{\dot b}=-\frac{f(b)}{2\beta_1}\left[ \frac{1-F(b)}{f(b)}-b \right]\equiv {\sf R}(b)
\label{eq: ODE two}
\end{equation}
modulo linear approximation errors $\bigO(\epsilon)$.

\begin{lemma}  \label{lm: c exists}
$\exists c >0$ such that
\[
{\sf R}(b) \ge -c (b-b^*(F)) \qquad b\le b^*(F)
\]
and
\[
{\sf R}(b)\le -c (b-b^*(F)) \qquad b\ge b^*(F).
\]
\end{lemma}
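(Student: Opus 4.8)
The plan is to collapse the statement into one uniform positivity estimate and then extract that estimate from the compactness of ${\mathcal F}^\eta$. First I would rewrite the right‑hand side of \eqref{eq: ODE two} as
\[
{\sf R}(b)=-\frac{f(b)}{2\beta_1}\left[\frac{1-F(b)}{f(b)}-b\right]
=\frac{1}{2\beta_1}\bigl(bf(b)-(1-F(b))\bigr)=\frac{1}{2\beta_1}R_0(b),
\]
where $R_0(b):=bf(b)-(1-F(b))$ is minus the derivative of the revenue $b(1-F(b))$, so that $R_0(b^*(F))=0$ by the first order condition. Since the projection facility keeps $(\beta_0,\beta_1)$ in the compact set ${\mathcal B}\subset(0,\infty)\times(-\infty,0)$, there are constants $0<\underline\beta_1\le\overline\beta_1$ with $-\beta_1\in[\underline\beta_1,\overline\beta_1]$, so it is enough to find a single constant $c_0>0$, uniform over $F\in{\mathcal F}^\eta$ and over $b$ in the compact range $I\subseteq[\plbar,\pubar]$ that $b=-\beta_0/(2\beta_1)$ traverses, such that $R_0(b)$ and $b-b^*(F)$ have the same sign and $|R_0(b)|\ge c_0\,|b-b^*(F)|$; taking $c=c_0/(2\overline\beta_1)$ and distinguishing the two sign cases $b\gtrless b^*(F)$ then yields both inequalities of the lemma.

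Second, I would get an elementary pointwise bound from the increasing hazard rate property. Writing $h(b)=(1-F(b))/f(b)$ and $g(b)=h(b)-b$, property {\sf IH} says $h$ is nonincreasing; for fixed $F$, $f$ is $\eta$‑Lipschitz and bounded away from $0$ on $[\plbar,\pubar]$, so $g$ is Lipschitz with $g'\le-1$ a.e. Since $g(b^*(F))=0$, integrating gives $g(b)\le-(b-b^*(F))$ for $b\ge b^*(F)$ and $g(b)\ge-(b-b^*(F))$ for $b\le b^*(F)$. Because $R_0(b)=-f(b)\,g(b)$, this shows $\operatorname{sign}R_0(b)=\operatorname{sign}(b-b^*(F))$ and $|R_0(b)|=f(b)\,|g(b)|\ge f(b)\,|b-b^*(F)|$.

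Third — and this is the only real difficulty, and the reason the lemma is isolated — I would upgrade $f(b)$ to a constant that does not depend on $F$. One cannot take $c_0=\inf_F\inf_b f(b)$, since this infimum is $0$ in general (the density may be allowed to become arbitrarily small somewhere as $F$ varies in ${\mathcal F}^\eta$; this is exactly the degenerate case flagged in the footnote to the heuristic bound). Instead set $c_0=\inf\{|R_0(b)|/|b-b^*(F)| : F\in{\mathcal F}^\eta,\ b\in I,\ b\ne b^*(F)\}$, which is $\ge0$ by the second step. If $c_0=0$, choose $(F_n,b_n)$ with $|R_0(b_n)|/|b_n-b^*(F_n)|\to0$; then $f_n(b_n)\to0$ by the pointwise bound. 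By Lemma \ref{lm: useful uniform}(1) together with the uniform Lipschitz bound on densities, a subsequence has $F_n\to F_*$ in ${\mathcal F}^\eta$ with $f_n\to f_*$ uniformly on $[\plbar,\pubar]$ and $b_n\to b_*\in I$, whence $f_n(b_n)\to f_*(b_*)$ and so $f_*(b_*)=0$ — contradicting $f_*>0$ on the closed interval $[\plbar,\pubar]$ (property {\sf IH}). Hence $c_0>0$, and the reduction of the first paragraph finishes the proof. The pointwise estimate is routine; the whole weight of the argument is in this compactness step, which is what buys uniformity without any uniform lower bound on the densities.
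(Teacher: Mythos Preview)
Your proof is correct and takes a genuinely different route from the paper's. The paper argues constructively: it first bounds $f(b^*(F))$ uniformly from below by combining the first–order condition $f(b^*(F))=(1-F(b^*(F)))/b^*(F)$ with Lemma~\ref{lm: useful uniform}(2), which pins $(b^*(F),1-F(b^*(F)))$ in a fixed compact subset of the interior of ${\mathbb R}^2_+$; it then uses the $\eta$–Lipschitz property to push that lower bound on $f$ to an $\epsilon$–neighborhood of $b^*(F)$, uses the monotonicity of ${\sf R}$ from {\sf IH} to propagate the estimate outside the neighborhood, and separately handles the regime $f(b)=0$ via the corner dynamics of the algorithm. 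You instead get the pointwise estimate $|R_0(b)|\ge f(b)\,|b-b^*(F)|$ in one line and extract uniformity by a compactness/contradiction argument resting on Lemma~\ref{lm: useful uniform}(1). Your route is shorter and in fact yields the stronger conclusion $\inf_{F\in{\mathcal F}^\eta,\,b\in I}f(b)>0$ --- precisely the hypothesis the paper says it does \emph{not} assume and therefore works around. The paper's route, in exchange, is more self–contained: it uses only part~(2) of Lemma~\ref{lm: useful uniform}, not the full sequential compactness of ${\mathcal F}^\eta$, so it does not hinge on the (unproved, and somewhat delicate) claim that limits of sequences in ${\mathcal F}^\eta$ remain in ${\mathcal F}^0$; it also explicitly covers the boundary regime, whereas you tacitly assume your range $I$ sits inside $[\plbar,\pubar]$ where $f>0$.
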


\begin{proof}
We constructed the projection facility so that $\beta_1<0$ and $\beta_0>0$ and moreover,
\[
\sup_{F\in {\mathcal F}^\eta}\beta_1 <0.
\]
If $\inf_{F\in {\mathcal F}^\eta}f(b)>0$, then the proof is trivial.   Since we only assume that $F\in {\mathcal F}^\eta$, we need more work.

Consider an iso-(expected) profit curve in the space of $(q,p)$
\[
\Pi = pq.
\]
Its slope is
\[
-\left.\frac{dq}{dp}\right|_{\Pi} =\frac{1-F(p)}{p}.
\]
If $b=b^*(F)$, then the slope of the iso-profit curve must be equal to the slope of the demand curve $f(p)$ at $p=b^*(F)$:
\[
\frac{1-F(b^*(F))}{b^*(F)} =f(b^*(F)).
\]
Since $b^*(F)\in [{\underline v}, {\overline v}]$, the slope of an iso-profit curve at the optimal price must be uniformly bounded: $\exists {\underline M}, {\overline M}>0$ such that
\[
{\underline M}\le f(b^*(F))=\frac{1-F(b^*(F))}{b^*(F)} \le {\overline M}.
\]
Since $F\in{\mathcal F}^\eta$ is uniformly Lipschitz continuous, 
for a sufficiently small $\epsilon>0$,
\[
f(b^*(F)-\epsilon)\ge {\underline M}-\eta\epsilon > \frac{{\underline M}}{2}
\]
and similarly,
\[
f(b^*(F)+\epsilon)\le -{\underline M}+\eta\epsilon < -\frac{{\underline M}}{2}.
\]
We prove that the right hand side of the ODE
\[
{\dot b}=-\frac{f(b)}{\beta_1} \left( \frac{1-F(b)}{f(b)} -b\right)
\]
is strictly bounded away from 0 over $b <b^*(F)-\epsilon$ and $b>b^*(F)+\epsilon$.

If $f(b)=0$, the increasing hazard rate property implies that $f(b')=0$ $\forall b'<b$, and in particular $f({\underline v})=0$.
Since $f(b^*(F))>0$, $b < b^*(F)$.
By the construction of the algorithm along the boundary,
${\dot b}=-\frac{1}{2\beta_1} >0$ uniformly, because $\sup_{F\in {\mathcal F}^\eta}\beta_1<0$.

Suppose $f(b)>0$ and $b\le b^*(F)-\epsilon$.
We know that ${\sf R}(b)$ defined in \eqref{eq: ODE two} is strictly decreasing because of the increasing hazard rate property.   We also know that ${\sf R}(b^*(F)-\epsilon) \ge \frac{{\underline M}}{2}$.   Thus,
\[
{\sf R}(b) \ge {\sf R}(b^*(F)-\epsilon) \ge \frac{{\underline M}}{2} >0.
\]
Similarly, if $b \ge b^*(F)+\epsilon$,
\[
{\sf R}(b) \le {\sf R}(b^*(F)+\epsilon) \le -\frac{{\underline M}}{2} <0.
\]
The increasing hazard rate property implies that  if $b> b^*(F)$,
\[
  \frac{1-F(b)}{f(b)}-b =
  \frac{1-F(b)}{f(b)}-b -\left(\frac{1-F(b^*(F))}{f(b^*(F))}-b^*(F)\right) < -(b-b^*(F))
\]
and if  $b< b^*(F)$,
\[
  \frac{1-F(b)}{f(b)}-b > -(b-b^*(F)).
\]
Thus,
\[
| {\sf R}(b)-{\sf R}(b^*(F)) | \ge  \frac{{\underline M}}{2} | b-b^*(F) |
\]
over $b\in [b^*(F)-\epsilon,b^*(F)+\epsilon]$. Since $b\in [{\underline v},{\overline v}]$, $\exists c >0$ such that
\[
{\sf R}(b) -{\sf R}(b^*(F))=R(b) \ge -c (b-b^*(F)) \qquad b\le b^*(F)
\]
and
\[
{\sf R}(b) -{\sf R}(b^*(F))=R(b) \le -c (b-b^*(F)) \qquad b\ge b^*(F).
\]
\end{proof}

For any initial value $b(0)\in [\vlbar,\vubar]$,
\[
\left|  b(\tau) -b^*(F)\right| \le e^{-c\tau} \left| b(0)-b^*(F)\right|\le e^{-c\tau}(\vubar-\vlbar).
\]
Let $\tau(\mu)$ be the first time when 
\[
\left| b(\tau) -b^*(F) \right| \le \mu.
\]
$(1-F(b^*(F)),b^*(F))\in K$ $\forall F\in {\mathcal F}$ and $K$ is a compact subset in the interior of ${\mathbb R}^2_+$.  Thus,
\[
\taubar (\mu) =\sup_{(\beta_0(0),\beta_1(0))\in {\sf B}}\tau(\mu) <\infty.
\]
and
\[
\taubar(\mu) \sim -\log\mu
\]
as $\mu\rightarrow 0$.   Let us choose $\tau=\taubar(\mu)$.

\subsection{Riemann Residual}

Let us consider \eqref{eq: Riemann}
\[
{\bf R}(a, F)=a\sum_{t=1}^{T(a)} \bbar(\beta_{t-1}) - \int_0^\tau \bbar(\beta(s))ds
\]
which is the Riemann residual.   Since $f$ is uniformly Lipschitz over ${\mathcal F}$, $\bbar(\beta)$ is uniformly Lipschitz:  $\exists\eta'>0$ such that
\[
\left| \bbar(\beta)-\bbar(\beta') \right|\le \eta' | \beta -\beta'| \qquad \forall F\in {\mathcal F}.
\]
For each subinterval of size $a$, the difference between the discrete value and the integration is at most $\frac{\eta' a^2}{2}$.   Thus,
\[
  {\bf R}(a, F) \le \frac{\eta' a^2}{2} \frac{\taubar(\mu)}{a}
  =\frac{\eta' a \taubar(\mu)}{2}.
\]
Note that the right hand side is independent of $F$.   Thus, $\forall\mu>0$,
define
\begin{equation}
{\overline a}=\frac{2\mu}{\taubar(\mu)\eta'}
\label{eq: aone}
\end{equation}
so that $\forall a\le {\overline a}$, $\forall F\in {\mathcal F}$,
\begin{equation}
{\bf R}(a,F) \le \mu.
\label{eq: Riemann upper bound}
\end{equation}
Thus,
\[
{\overline a}=\bigO\left( -\frac{\mu}{\log\mu}\right)
\]
which implies
\[
\frac{\tau(\mu)}{{\overline a}}=\bigO\left(\frac{(\log\mu)^2}{\mu} \right).
\]

\subsection{Lower Bound of Confidence}

Next, we examine \eqref{eq: noise}.   Consider $\frac{\xi_t}{\xi}$, which is a martingale difference with
  \[
    \left| \frac{\xi_t}{\xi}\right| \le 1 \ \ \text{and} \ \ 
    \Expect \left(\frac{\xi_t}{\xi}\right)^2\le 1.
  \]
Since $\frac{\xi_t}{\xi}$ satisfies the large deviation property, $\forall\mu'>0$, $\exists\rho(\mu',F)>0$ (called the rate function) such that
\[
\Prob \left( 
\left| \frac{1}{T}\sum_{t=1}^{T}  \frac{\xi_t}{\xi} \right| >\mu'
\right) \le e^{-\rho(\mu',F) T}.
\]
We need a uniform rate function $\rho(\mu',F) >0$ over ${\mathcal F}$.   By Azuma-Hoeffding-Bennett inequality (Corollary 2.4.7 in \citeasnoun{DemboandZeitouni98}), we have $\forall \mu' \in (0, 1/2)$,
\[
\Prob\left(
\left|\frac{1}{T}\sum_{t=1}^{T} \frac{\xi_t}{\xi}\right| > \mu'
\right)
\le e^{-2T {\sf H}\left(\mu' +\frac{1}{2}\mid \frac{1}{2}\right)}
\]
where
\[
{\sf H}(p \mid p_0)=p \log\frac{p}{p_0}+(1-p)\log\frac{1-p}{1-p_0}
\]
for $p,p_0\in (0,1)$.
Let
\[
T(a,\mu)=\left\lceil \frac{\taubar(\mu)}{a}\right\rceil.
\]
Then,
\[
\Prob\left(  \left| a \sum_{t=1}^{T(a,\mu)} \xi_t \right| <\taubar(\mu)\xi\mu' \right)
\le e^{-2T{\sf H} \left(\mu' +\frac{1}{2}\mid \frac{1}{2}\right)}.
\]
Let
\[
  \mu'=\frac{\mu}{\taubar(\mu)\xi}.
\]
After substitution, we have
\[
\Prob\left( \left| \frac{1}{T(a,\mu)} \sum_{t=1}^{T(a,\mu)} \xi_t \right| >\mu \right)
\le e^{-2 T(a,\mu){\sf H} \left(\mu' +\frac{1}{2}\mid \frac{1}{2}\right)}.
\]
Since $\taubar(\mu)=\bigO(-\log \mu)$,
\[
\frac{\mu}{\taubar(\mu)\xi} =\bigO(\mu)
\]
and therefore, $T(a,\mu)$ increases at the polynomial speed as $\mu\rightarrow 0$.
Let
\[
\rho =2{\sf H} \left(\mu' +\frac{1}{2}\mid \frac{1}{2}\right)>0.
\]

\subsection{Combine the Pieces}

Fix $\mu>0$.  Recall that we assume that $\epsilon_{1,t}$ is distributed over $[-\epsilon,\epsilon]$.  We first choose $\epsilon>0$ so that the stationary solution \eqref{eq: stationary solution} of ODE is within $\mu$ neighborhood of $\beta^*(F)$ $\forall F\in {\mathcal F}$.

Since $\forall f\in {\mathcal F}$
\[
|f(p)-f(p')| \le \eta |p- p'|.
\]
The Taylor residual is bounded by $\eta \epsilon$ for small $\epsilon>0$:
\[
|\bigO(\epsilon^3)| \le |\bigO(\epsilon^2)| \le \eta\epsilon^2 \qquad\forall t\ge 1. 
\]
Note that the last term is independent of $F\in {\mathcal F}$. Choose $\epsilon>0$ sufficiently small so that
\[
\left| \beta^s - \beta^*(F) \right| <\mu \qquad\forall F\in {\mathcal F}
\]
where $\beta^s$ solves \eqref{eq: stationary solution}.  We chose $\taubar(\mu)$ so that
\[
| \beta(\taubar(\mu)) -\beta^s | <\mu.
\]
Given $\mu>0$, we chose $a_1>0$ in \eqref{eq: aone}  so that the Riemann residual ${\sf R}(a,F)$ satisfies \eqref{eq: Riemann upper bound}.

By the construction, $\forall a\in (0,a_1)$, 
\[
\Prob\left( \left| \frac{1}{T(a,\tau}\sum_{t=1}^{T(a,\tau)} \xi_t \right| >\mu \right) < e^{-T(a,\tau)\rho}.
\]
Combining these results, we have $\forall a\in (0,a_1)$, 
\[
\Prob \left( \left| \beta_{T(a,\tau)}-\beta^*(F) \right| \ge 4\mu\right) \le e^{-T(a,\tau)\rho}
\]
where $T(a,\tau)$ increases linearly with respect to $1/a$ and at the polynomial speed with respect to $1/\mu$.

\newpage
\normalsize
\bibliographystyle{econometrica}
\bibliography{adaboost}

\end{document}